\documentclass[11pt]{article}

\usepackage{amsmath,amsfonts}
\usepackage{algorithm}
\usepackage{array}
\usepackage[caption=false,font=normalsize,labelfont=sf,textfont=sf]{subfig}
\usepackage{textcomp}
\usepackage{stfloats}
\usepackage{url}
\usepackage{verbatim}
\usepackage{graphicx}
\usepackage{cite}
\usepackage{amssymb}
\usepackage{amsthm}
\usepackage{xcolor}
\usepackage{multirow} 
\usepackage{algpseudocode}

\usepackage[a4paper, top=2cm, bottom=2cm, left=2cm, right=2cm]{geometry}

\usepackage{hyperref} 
\hypersetup{colorlinks=true,linkcolor=black,filecolor=black,urlcolor=black,citecolor=black}

\algnewcommand{\IIf}[1]{\State\algorithmicif\ #1\ \algorithmicthen}
\algnewcommand{\EndIIf}{\unskip\ \algorithmicend\ \algorithmicif}

\newcommand{\R}{\mathbb R}

\newcommand{\C}{\mathcal C}

\newcommand{\Lc}{\mathcal L}

\newcommand{\Hc}{\mathcal{H}}
\newcommand{\Sc}{\mathcal S}
\newcommand{\Uc}{\mathcal U}

\newcommand{\db}{\mathbf d}
\newcommand{\e}{\mathbf e}

\newcommand{\q}{\mathbf q}
\newcommand{\ub}{\mathbf u}
\newcommand{\vb}{\mathbf v}
\newcommand{\w}{\mathbf w}
\newcommand{\x}{\mathbf x}
\newcommand{\y}{\mathbf y}
\newcommand{\z}{\mathbf z}

\newcommand{\A}{\mathbf A}
\newcommand{\B}{\mathbf B}

\newcommand{\G}{\mathbf G}

\newcommand{\I}{\mathbf I}
\newcommand{\J}{\mathbf J}
\newcommand{\K}{\mathbf K}
\newcommand{\Lb}{\mathbf L}
\newcommand{\M}{\mathbf M}
\newcommand{\N}{\mathbf N}

\newcommand{\Q}{\mathbf Q}

\newcommand{\U}{\mathbf U}
\newcommand{\V}{\mathbf V}

\newcommand{\Z}{\mathbf Z}

\newcommand{\Lambdab}{\boldsymbol \Lambda}
\newcommand{\betab}{\boldsymbol \beta}
\newcommand{\gammab}{\boldsymbol \gamma}

\newcommand{\0}{\mathbf 0}
\newcommand{\1}{\mathbf 1}

\newcommand{\diag}{\mathrm{diag}}
\newcommand{\tr}{\mathrm{tr}}

\newcommand{\rank}{\mathrm{rank}}

\theoremstyle{definition}
\newtheorem{theorem}{Theorem}
\newtheorem{lemma}{Lemma}
\newtheorem{cor}{Corollary}

\title{Laplacian Network Optimization via Information Functions}
\author{Samuel Rosa and Radoslav Harman}
\date{\today}

\begin{document}

\maketitle

\begin{abstract}
Designing networks to optimize robustness and other performance metrics is a well-established problem with applications ranging from electrical engineering to communication networks. Many such performance measures rely on the Laplacian spectrum; notable examples include total effective resistance, the number of spanning trees, and algebraic connectivity. This paper advances the study of Laplacian-based network optimization by drawing on ideas from experimental design in statistics. We present a theoretical framework for analyzing performance measures by introducing the notion of information functions, which captures a set of their desirable properties. Then, we formulate a new parametric family of information functions, Kiefer's measures, which encompasses the three most common spectral objectives. We provide a regular reformulation of the Laplacian optimization problem, and we use this reformulation to compute directional derivatives of Kiefer's measures. The directional derivatives provide a unified treatment of quantities recurring in Laplacian optimization, such as gradients and subgradients, and we show that they are connected to Laplacian-based measures of node distance, which we call node dissimilarities. We apply the node dissimilarities to derive efficient rank-one update formulas for Kiefer's criteria, and to devise a new edge-exchange method for network optimization. These update formulas enable greedy and exchange algorithms with reduced asymptotic time complexity.
\end{abstract}

\textit{Keywords:}
Graph, network design, Laplacian spectrum, robustness,  reliability, effective resistance 

\section{Introduction}
\label{sIntro}

Laplacian spectral measures for assessing networks have been applied in a variety of fields, including electrical engineering, communication systems, robotics, and dynamical networks \cite{GhoshEA08, Tizghadam10, EchefuEA24, SiamiMotee}. Across these domains, such measures have meaningful interpretations in terms of connectivity, robustness, or network dynamics. Moreover, many Laplacian optimization problems can also be viewed as special cases of optimal experimental design problems studied in mathematical statistics, a connection noted decades ago \cite{Cheng81, GhoshEA08, KhosoussiEA14}. For conciseness, we refer to these fields as Network design (ND) and Experimental design (ED). 

Although the underlying problems in different areas of ND are often similar or even mathematically equivalent, analytical and algorithmic tools are frequently developed independently, with varying terminology. For instance, the effective resistance \cite{GhoshEA08} is also known as network criticality \cite{Tizghadam10}, the $\Hc_2$ norm \cite{YoungEA}, and the Kirchhoff index \cite{YangKlein}. Consequently, there are many ad-hoc results on Laplacian optimization across the fields of ND, with some fields lacking an overarching theoretical framework for the study of these problems. Furthermore, because ND is a relatively young field, some aspects of Laplacian optimization in ND are less developed than those in the ED literature. In this paper, we draw on concepts from ED to provide new contributions to ND, especially in extending the existing theoretical framework but also in providing a new optimization algorithm, as detailed in Section \ref{ssContrib}. To connect the various areas of ND that use Laplacian spectral measures, and to emphasize the equivalence of some studied problems, we first provide a short survey, summarizing the different terminologies and highlighting common results.

\subsection{Laplacian optimization in Network design}
\label{ssAreas}

The optimization of Laplacian spectra involves constructing a network so that a selected spectral measure is maximized (or minimized). In particular, we consider the problem of adding $N$ edges to a pre-existing (possibly empty) network.
The most widely used Laplacian eigenvalue-based measures fall into three classes: measures equivalent to (i) effective resistance, (ii) the number of spanning trees, and (iii) algebraic connectivity. Effective resistance corresponds to $A$-optimality, the number of spanning trees to $D$-optimality, and algebraic connectivity to $E$-optimality in the ED context. To denote these three classes clearly, we adopt the alphabetical notation from ED, which allows for a simple and unified treatment.

There are numerous motivations and interpretations for these performance measures. Below we give an overview. In each case, the measures are not necessarily identical, but the corresponding optimization problems are equivalent.
\begin{itemize}
	\item $A$-optimality: total effective resistance \cite{GhoshEA08}, expected average commute time \cite{GhoshEA08}, network criticality \cite{Tizghadam10}, 
	Kirchhoff index \cite{EllensEA, YangKlein}, $\Hc_2$ norm \cite{YoungEA, SiamiMotee, BhelaEA}, measure of network robustness \cite{EllensKooij}, measure of average centrality \cite{BozzoEA}.
	\item $D$-optimality: number of spanning trees, $t$-optimality, tree-connectivity \cite{KhosoussiEA20}, graph complexity \cite{KhosoussiEA20}, volume of feasible bus injections \cite{ThiamDeMarco}, measure of network robustness \cite{Baras}, uncertainty volume \cite{SiamiMoteeSchur}, steady-state entropy \cite{SiamiMoteeSchur}.
	\item $E$-optimality: algebraic connectivity \cite{GhoshEA08, WanEA}, measure of network robustness \cite{EllensKooij}, Hankel norm \cite{SiamiMotee18}.
\end{itemize}

We also provide an overview of the areas and corresponding optimization problems that lead to $A$-, $D$-, or $E$-optimality for network Laplacians, with the relevant measures indicated in parentheses:
\begin{itemize}
	\item Robustness and reliability of general networks (e.g. transportation, computer networks): maximizing robustness \cite{EllensKooij, ChanAkoglu, YamashitaEA, FreitasEA23, PizzutiSocievole, Baras} (A, D, E), maximizing reliability \cite{WeichenbergEA} (D).
	\item Consensus networks (Dynamical networks): optimal topology \cite{DaiMesbahi} (E), minimizing $\Hc_2$-norm \cite{YoungEA, SiamiMotee} (A), maximizing robustness \cite{SiamiMotee, SiamiMotee18} (A, D, E), minimizing uncertainty volume \cite{SiamiMoteeSchur, SiamiMotee18} (D), minimizing steady-state entropy \cite{SiamiMoteeSchur} (D). We note that the problem of \textit{linear} consensus networks leads to the Laplacian optimization problem considered in the present paper \cite{SiamiMotee, SiamiMotee18}; other types of consensus networks generally lead to more complex problems.
	\item Estimation on graphs, SLAM (Simultaneous localization and mapping): maximizing connectivity \cite{KhosoussiEA20, DohertyEA22, KavetiEA24, DohertyEA24} (D, E).
	\item General graph theory: minimizing the total effective resistance \cite{GhoshEA08} (A), maximizing algebraic connectivity \cite{GhoshEA08, GhoshBoyd, SomisettyEA24} (E), maximizing the number of spanning trees \cite{LiEA} (D).
	\item Electrical networks: minimizing the total effective resistance \cite{GhoshEA08} (A), maximizing the volume of feasible bus injections \cite{ThiamDeMarco} (D), minimizing the Kirchhoff index \cite{ZhouEA} (A). 
	\item Communication networks: minimizing network criticality \cite{Tizghadam10} (A), minimizing the average travel cost \cite{Tizghadam10} (A), maximizing robustness \cite{YoungEA} (A).
	\item Multi-vehicle or multi-robot patrolling: minimizing the average expected commute time \cite{AlamEA, EchefuEA24} (A).
	\item Markov chains: minimizing the average expected commute time \cite{GhoshEA08, ChandraEA89} (A).
	\item Network of pairwise comparisons: maximizing the information of a ranking \cite{OstingEA} (A, D, E).
	\item Air transportation: maximizing robustness \cite{YangEA19, WeiEA} (A, E).
	\item Graph neural networks: minimizing oversquashing \cite{BlackEA} (A).
	\item Friend recommendation in social networks: optimizing content spread \cite{Yu15} (E).
	\item Structure of isomers in chemistry: maximizing chemical reactivity \cite{YangKlein} (A).
\end{itemize}

Optimization of the Laplacian spectra is particularly well studied in robustness and reliability of general networks, dynamical networks, and SLAM. However, these three areas have largely developed independently, often arriving at similar results, while partial results also appear in other areas. Below, we present some notable examples, but a thorough analysis of findings in different areas of ND is beyond the scope of this paper. 

Computational methods for $A$-, $D$-, and $E$-optimal network problems typically involve sequential greedy algorithms or continuous convex relaxations solved by convex programming. Greedy methods have been developed independently in general networks \cite{PredariEA, ZhouEA}, consensus networks \cite{DekaEA}, SLAM \cite{KhosoussiEA20}, and air transportation \cite{WeiEA}. Some important results that are not widely known in all areas of network science include the submodularity of $D$-optimality, which enables guarantees on the quality of the greedy solution \cite{SiamiMotee18, KhosoussiEA19, KhosoussiEA20}, and methods for accelerating the greedy algorithm using randomization in multiple computations \cite{PredariEA, ZhouEA}. Mixed-integer programming methods have also been used \cite{BhelaEA, SomisettyEA24}, though they are generally limited to small network sizes. Convex relaxation approaches have been developed independently in general graph theory \cite{GhoshEA08}, robust networks \cite{Baras}, SLAM \cite{KhosoussiEA20, DohertyEA24}, dynamical networks \cite{SiamiMotee, SiamiMoteeSchur} and air transportation \cite{YangEA19}. These relaxed problems are typically solved using semidefinite programming \cite{GhoshBoyd} or the Frank-Wolfe method \cite{DohertyEA24}.

Analogous findings have also been obtained in ED, such as submodular guarantees for greedy methods \cite{Sagnol13}, convex relaxations \cite{puk} (which are known as approximate designs), semidefinite programming formulations \cite{VanderbergheBoyd}, and the Frank-Wolfe approach \cite{Wynn, Fedorov}.

To maintain focus, we exclude the convex relaxation approach from this paper and defer its more detailed examination to future research.

\subsection{Contributions and related work}
\label{ssContrib}

\paragraph{Overview of Laplacian spectra optimization} We provided a brief survey of optimization of Laplacian spectra for designing networks in Section \ref{ssAreas}, summarizing different terminologies and areas of ND. Previous works that contain notable surveys include
\cite{GhoshEA08}, which, however, predates many of the results mentioned above; and \cite{FreitasEA23}, which reviewed the robustness and reliability literature across multiple categories of networks, such as road networks. While many works acknowledge related research in other areas, the underlying equivalence of the optimization problems and the consequent similarity of results is often underemphasized.

\paragraph{Axiomatic study of network measures} We propose a set of desirable properties of performance measures, captured by the notion of information functions (analogously to ED; see \cite{puk}, Chapter 5). In dynamical networks, subsets of these properties were used to characterize ``systemic performance measures'' \cite{SiamiMotee, SiamiMotee18}. In other network domains, similar findings do not appear to have been established. Indeed, in the area of reliability and robustness, the “axiomatic study of desired properties in robustness measures” is identified as an important direction for future research \cite{FreitasEA23}.

\paragraph{New measures}
Inspired by ED (see Chapter 6 in \cite{puk}, Section 2.4.10 in \cite{LopezFidalgo2023}), we formulate Kiefer’s $\Phi_p$ measures, a natural parametric class of eigenvalue-based information functions that includes $A$-, $D$-, and $E$-optimality. This parametrization highlights both the differences and similarities among the three common measures and provides compromise measures (e.g., measures that lie between $A$- and $D$-optimality). A subclass of these criteria was given in dynamical networks as spectral zeta functions, which are equivalent to Hardy-Schatten system norms or $\Hc_p$ norms \cite{SiamiMotee, SiamiMotee18}. However, the spectral zeta functions do not include compromise measures between $A$- and $D$-optimality, nor do they include $D$-optimality itself. 

\paragraph{Regular reformulation and directional derivatives} We provide a regular reformulation of the optimal network problem, which allows for a convenient theoretical analysis. A similar technique has been used to address consensus networks \cite{YoungEA} and the subset of ED problems equivalent to network optimization \cite{HarmanFilova, SagnolHarman15}. Other regularization methods include using the reduced (or grounded) Laplacian or expressing the pseudoinverse of the Laplacian through the inverse of a non-singular matrix \cite{GhoshEA08}.

Using this reformulation, we compute directional derivatives of the Laplacian measures, which serve as an important tool for both theoretical analysis and algorithmic optimization of networks. They provide a common framework for quantities that are used \textit{ad hoc} in different areas, such as gradients, subgradients, and small network perturbations (e.g., \cite{Tizghadam10, TizghadamEA11, YangEA19} for $A$-optimality; \cite{Baras, ThiamDeMarco, ChanAkoglu} for $D$-optimality;  \cite{GhoshBoyd, Yu15, ChanAkoglu, DohertyEA24} for $E$-optimality). We show that the directional derivatives are closely tied to Laplacian-based measures of distance between nodes, which we call node dissimilarities. Moreover, we provide formulas for the directional derivatives of all $\Phi_p$ measures, which include $A$-, $D$-, and $E$-optimality as special cases. Directional derivatives were used to approximate changes in performance measures in dynamical networks \cite{SiamiMotee18}, where the directional derivatives for $A$-optimality, but not for the other $\Phi_p$ measures, are provided. 

\paragraph{Rank-one updates and efficient implementation} The removal or the addition of a single edge results in a rank-one update of the network's Laplacian. Using the node dissimilarities, we provide efficient update formulas for all $\Phi_p$ measures with integer values of $p$, including $A$- and $D$-optimality. For $A$- and $D$-optimality, such formulas are common \cite{SiamiMotee18, ZhouEA}, but they have not been known for other $\Phi_p$ measures. Consequently, we provide an efficient version of the greedy method, which has smaller asymptotic time complexity than a straightforward implementation.

\paragraph{Exchange algorithm} 
We propose the KL exchange method, which is an edge-exchange-based heuristic for network optimization that utilizes the node dissimilarities, inspired by exchange algorithms from ED (see \cite{AtkinsonEA07}, Section 12.6). Using the rank-one update formulas, we provide an efficient version of the method for $\Phi_p$ measures with improved time complexity. We then use examples to show that the KL exchange method can be used to improve upon the solution provided by the greedy algorithm, within a relatively short computation time. An edge-exchanging heuristic for optimizing $A$-optimality was given in \cite{HackettEA}, which, however, considers only a subset of possible exchanges utilized in our algorithm. Edge exchanges were also employed in the EdgeRewire method \cite{ChanAkoglu}, but it targets different, more restrictive optimization settings, including a fixed number of exchanges that preserve node degrees.

\section{Background}

\subsection{Matrix notation}

We denote matrices by uppercase bold letters and (column) vectors by lowercase bold letters. The symbols $\R^{m \times n}$, $\Sc^n$, and $\Sc^n_+$ denote the sets of all $m \times n$ matrices, $n \times n$ symmetric matrices, and $n \times n$ nonnegative definite matrices, respectively. For a matrix $\A$, $\C(\A)$ is its column space, $\A^+$ is its Moore-Penrose pseudoinverse, and if $\A \in \Sc^{n}$, then $\lambda_1(\A) \geq \ldots \geq \lambda_n(\A) $ are its eigenvalues. For $\A \in \Sc^{n}$, $\A^0$ is the projection matrix to $\C(\A)$. The superscript $\perp$ denotes the orthogonal complement. The Loewner ordering of symmetric matrices is denoted by $\succeq$ (i.e. $\A \succeq \B$ iff $\A - \B \in \Sc^n_+$). The diagonal matrix with elements of $\x$ on the diagonal is denoted by $\diag(\x)$. Symbols $\I_n$ and $\J_n$ denote the $n\times n$ identity matrix and the $n\times n$ matrix of all ones, respectively. The matrix or the vector of zeros is denoted by $\0$. The $n \times 1$ vector of ones is denoted by $\1_n$, and $\e_i$ is the vector with $i$th element equal to one and all other elements zero. 

\subsection{Graph theory}

Let $G=(V, E, \w)$ be a (simple undirected) weighted graph, where $V = \{1, \ldots, n\}$ is the set of nodes, $E \subseteq {V \choose 2}$ is the set of $m$ edges with weights $w_e\geq 0$, $e \in E$. To simplify the notation, we identify the edge weights with the nonnegative vector $\w \in \R^m$. The incidence matrix of $G$ is an $n \times m$ matrix $\Q$, which is obtained by arbitrarily assigning orientations to the edges, and then its elements are $q_{ie} = 1$ if edge $e$ leads to node $i$, $q_{ie} = -1$ if $e$ leads from $i$, and $q_{ie} = 0$ otherwise. The Laplacian $\Lb \in \R^{n \times n}$ is then given by $\Lb = \Q \diag(\w) \Q^T \in \R^{n \times n}$ and its elements satisfy 
\begin{equation*}
\ell_{i,j} =
\begin{cases}
	\sum_{e \sim i} w_e, &i = j, \\
	-w_{(i, j)}, & (i, j) \in E, \\
	0, &\text{otherwise},
\end{cases}
\end{equation*}
where $e \sim i$ denotes edges $e$ incident with $i$. The Laplacian $\Lb$ always has at least one eigenvalue equal to zero (i.e., $\rank(\Lb) \leq n-1$), and $\Lb$ attains the maximum rank of $n-1$ iff the graph $G$ is connected. 

\subsection{Optimal networks}

The Laplacian optimization problem for network design can be stated as follows. For a given set of nodes $V$, edge weights $\w$ on the complete graph $K_n$, and pre-existing edges $E_0 \subset {V \choose 2}$, choose $N$ edges $E$ among ${V \choose 2} \setminus E_0$ so that $\Phi(\Lb_E)$ is maximized. Here, $\Lb_{E}$ is the weighted Laplacian corresponding to $G = (V, E_0 \cup E, \w)$, and $\Phi$ is a pre-selected metric measuring the ``quality'' of the network. With a slight abuse of notation, the weights $\w$ in $G$ are obtained by restricting the original weights $\w$ to the subgraph $G$ of the complete graph $K_n = (V, {V \choose 2}, \w)$. Taking edge weights into consideration is a common approach \cite{SiamiMotee, KhosoussiEA20, DohertyEA24}; such weights model the strengths or the importance of the connections (e.g., the precision of the measurements in SLAM \cite{KhosoussiEA20}).

Formally,
\begin{align}
	\max_{E} \quad &\Phi(\Lb_E) \label{ePhiopt}\\
	s.t. \quad & E \subseteq {V \choose 2} \setminus E_0, \notag \\
	&\vert E \vert = N. \notag
\end{align}
A $\Phi$-optimal network is denoted as $E^*$.
Problems without pre-existing edges $E_0$ are achieved by simply setting $E_0 = \emptyset$, and those without edge weights by setting $w_e = 1$ for all edges $e$. For simplicity, we refer to the network (or graph) $G = (V, E_0 \cup E, \w)$ as the network (or graph) $E$.

Some of the most popular functions $\Phi$ are based on the nontrivial eigenvalues $\lambda_1 \geq \ldots  \geq \lambda_{n-1}$ of $\Lb$. Suppose that $\Lb$ has rank $n-1$ (i.e., the network is connected). Then the three most popular measures are $\Phi_A(\Lb) = (n-1)(\sum_{i=1}^{n-1} \lambda_i^{-1})^{-1}$, $\Phi_D(\Lb) = (\prod_{i=1}^{n-1} \lambda_i(\Lb))^{1/(n-1)}$ and $\Phi_E(\Lb) = \lambda_{n-1}$. These three measures have different names in different contexts (see Section \ref{ssAreas}); in this paper, we denote them as $A$-, $D$- and $E$-optimality, respectively. Besides their domain-specific interpretations, $A$- and $D$-optimality represent the harmonic mean and the geometric mean of the positive Laplacian eigenvalues, respectively. Although there exist other versions of these criteria (e.g., $\log(\prod_{i=1}^{n-1} \lambda_i)$ for $D$-optimality), we deliberately use these versions because they satisfy the conditions of information functions (Section \ref{sInfFun}).

\subsection{Optimal experimental designs}

The usual aim of optimal experimental design is to select $N$ trials to maximize the information obtained from the experiment. Formally, a set of $k$ design points $\x_1, \ldots, \x_k \in \R^n$ is given and the optimization problem is
\begin{align}
	\max_{\db} \quad &\Phi(\M(\db)) \label{eOptDes}\\
	s.t. \quad & \M(\db) = \M_0 + \sum_{e=1}^k d_e \x_e \x_e^T, \notag \\
	& \db \in \{0,1\}^k, \quad \sum_{e=1}^k d_e = N. \notag
\end{align}
Here, $\db$ is the design (also called ``exact'' design), $d_e$ represents the number of trials performed at design point $\x_e$, $\M(\db)$ is the information matrix capturing the amount of information conveyed by running the experiment according to $\db$, the matrix $\M_0$ represents some form of prior information, and $\x_e \x_e^T$ is an elementary information matrix corresponding to a single trial at $\x_e$. The value $\Phi(\M(\db))$ then quantifies the amount of information as a real number. A $\Phi$-optimal design, i.e., a solution to \eqref{eOptDes}, is denoted as $\db^*$.

The simplest statistical motivation for \eqref{eOptDes} is that this problem corresponds to finding the optimal design for the linear regression model $y = \x^T \betab + \varepsilon$, where independent homoscedastic observations $y$ are collected at those design points $\x = \x_e$ for which $d_e = 1$. In this setting, $\M(\db)$ is a constant multiple of the Fisher information matrix, and if it is non-singular, then $\M^{-1}(\db)$ is proportional to the covariance matrix of the least-squares estimator of $\betab$. Problem \eqref{eOptDes} therefore represents the minimization of some real-valued aspect of this covariance matrix, making the estimator as precise as possible in the sense defined by $\Phi$. For more information, see the monographs \cite{Fedorov, Pazman86, puk, AtkinsonEA07, LopezFidalgo2023}.

To compare \eqref{ePhiopt} and \eqref{eOptDes}, let $E_0^c = {V \choose 2} \setminus E_0$, $k = \lvert E_0^c \rvert$, and let $\Lb_0$ be the Laplacian of the initial network $(V, E_0, \w)$. For an edge $e = (i,j)$, let $\q_e = \e_i - \e_j$ and $\x_e = \sqrt{w_e} \q_e$. Note that we use this notation throughout the paper. The Laplacian $\Lb_E$ can then be expressed as 
\begin{equation*}
\Lb_E = \Lb_0 + \sum_{e \in E_0^c} d_e w_e \q_e \q_e^T
= \Lb_0 + \sum_{e \in E_0^c} d_e \x_e \x_e^T,
\end{equation*}
where $\db$ is the network design with $d_e = 1$ if $e \in E$, $d_e = 0$ otherwise. It follows that the Network design problem \eqref{ePhiopt} is a special case of \eqref{eOptDes}, as noted in Section \ref{sIntro}. In fact, the optimal network problem corresponds to the optimal block design problem (with blocks of size two), where the nodes represent the treatments and the edges represent the blocks \cite{Cheng81}. This problem is also closely related to ED for the Bradley-Terry model \cite{GrasshoffEA04, GrasshoffSchwabe}.

Although Laplacian optimization corresponds to an important class of problems in ED, it is not entirely typical of experimental design problems. Most notably, while the Laplacian matrix has rank at most $n-1$, optimal information matrices in ED are usually of full rank (with block designs being an exception). As a result, a form of regularization (Section \ref{ssRegular}) is needed to translate results from ED to the ND setting. Moreover, the dimensions of the information matrices in ED are generally much smaller than those of the Laplacian matrices encountered in ND.

Similarly to ND, the most popular measures in ED (``optimality criteria'') are $A$- and $D$-optimality, with $E$-optimality also being among the most prominent ones. From the ED perspective, the condition $d_e \in \{0,1\}$ corresponds to binary (replication-free) designs, which allow at most one trial at each design point. Allowing $d_e$ to take values in $\{0,1,\ldots,N\}$ instead enables multiple trials at each design point. The assumption of pre-existing edges $E_0$ that must be present in the final network corresponds to protected, required, or existing trials in the experimental design context. Optimal experimental designs for such situations are called optimally augmented designs (Section 19 in \cite{AtkinsonEA07}) or optimal designs with protected runs (Section 11.7 in \cite{puk}), and they are closely related to optimal Bayesian designs (Section 11 in \cite{puk}; \cite{ChalonerVerdinelli}). The convex relaxations in ED correspond to the so-called approximate experimental designs \cite{puk}, which are obtained by letting $d_e \in [0,1]$ and $\sum_{e} d_e = 1$ instead of $d_e \in \{0,1\}$ and $\sum_{e} d_e = N$.

\section{Information functions}\label{sInfFun}
We propose that reasonable Laplacian measures are information functions, which are defined by the following conditions. Let $\Lc^n$ be the set of all $n \times n$ potential Laplacians: $\Lc^n = \{ \Lb \in \Sc^n_+ \, \vert \, \Lb \1_n = \0 \}$, which is a convex cone in $\Sc^n_+$. Then, a function $\Phi: \Lc^n \to \R$ is an information function on $\Lc^n$ if and only if it is non-constant and satisfies the following properties: 
\begin{enumerate}
    \item Isotonicity: if $\Lb_1 \succeq \Lb_2$ for some $\Lb_1,\Lb_2 \in \Lc^n$, then $\Phi(\Lb_1) \geq \Phi(\Lb_2)$;
	\item Positive homogeneity: $\Phi(\alpha \Lb) = \alpha \Phi(\Lb)$ for all $\Lb \in \Lc^n$ and $\alpha > 0$;
	\item Concavity: $\Phi(\alpha \Lb_1 + (1-\alpha) \Lb_2) \geq \alpha \Phi(\Lb_1) + (1-\alpha) \Phi(\Lb_2)$ for all $\Lb_1,\Lb_2 \in \Lc^n$ and $\alpha \in [0,1]$;
	\item Upper semicontinuity: the level sets $\{\Lb \in \Lc^n \, \vert \, \Phi(\Lb) \geq \alpha\}$ are closed for all $\alpha \in \R$.
\end{enumerate}

Note that information functions are necessarily nonnegative: $\Phi(\Lb) \geq 0$ for all $\Lb \in \Lc^n$.
Isotonicity of an information function $\Phi$ directly implies that $\Phi$ does not decrease when any number of edges are added to the network. Positive homogeneity enables a meaningful comparison of networks through the ratios of their measure values. Indeed, for any two networks $E_1$ and $E_2$, and any positively homogeneous $\Phi$, the ratio $\Phi(\Lb_{E_1}) / \Phi(\Lb_{E_2})$ remains unchanged if we multiply the edge weights by a common constant $\alpha > 0$ (or if we replicate all edges $\alpha$ times when $\alpha$ is an integer). That is, positive homogeneity ensures that the network comparisons do not depend on whether the weights in the graph are normalized or not.

Concavity together with positive homogeneity guarantee that $\Phi$ is superadditive: $\Phi(\Lb_1 + \Lb_2) \geq \Phi(\Lb_1) + \Phi(\Lb_2)$. This captures the ``synergistic'' aspect of network performance: the quality of the union of two networks with disjoint edge sets is at least the sum of the individual network qualities. Upper semicontinuity is a more technical property, which (together with the other properties) guarantees that there exists a $\Phi$-optimal Laplacian $\Lb^*$ on any compact subset of $\Lc^n$. See Chapter 5 of \cite{puk} for a more detailed analysis of the information functions.

Functions $\Phi_A$, $\Phi_E$ and $\Phi_D$, as defined above, are information functions. The $D$-optimality measure is often defined in the logarithmic form $\log(\prod_{i=1}^{n-1}\lambda_i)$, which, however, is not positively homogeneous and thus not an information function. The logarithmic version of $D$-optimality is therefore not appropriate when edge weights are present and one wishes to compare networks via ratios of measures.

In this paper, we focus on information functions $\Phi$ that depend exclusively on the eigenvalues of $\Lb$, which clearly covers $A$-, $E$- and $D$-optimality. This requirement is equivalent to orthogonal invariance, i.e., $\Phi(\U \Lb \U^T) = \Phi(\Lb)$ for any $\Lb \in \Lc^n$ and any orthogonal $n \times n$ matrix $\U$ \cite{Harman04}. 

The ``convex systemic measures'' proposed for dynamical networks \cite{SiamiMotee} correspond exactly to the information functions (albeit in a minimizing rather than maximizing setting), except that our definition additionally requires semicontinuity and nonconstancy. Similarly, the ``orthogonally invariant systemic performance measures'' \cite{SiamiMotee18} correspond to orthogonally invariant (eigenvalue-based) information functions, but they do not impose the properties of positive homogeneity, upper semicontinuity and nonconstancy.

\section{Kiefer's measures}
\label{sKiefer}

We define a rich class of orthogonally invariant information functions, which covers $A$-, $D$- and $E$-optimality: Kiefer's $\Phi_p$ measures
\begin{equation*}
\Phi_p(\Lb) = \begin{cases}
	(\prod_{i=1}^{n-1} \lambda_i(\Lb))^{1/(n-1)}, &p=0; \\
	(\frac{1}{n-1} \sum_{i=1}^{n-1} \lambda_i^{-p}(\Lb))^{-1/p}, &p \in (0, \infty); \\
	\lambda_{n-1}(\Lb), &p = \infty
\end{cases}
\end{equation*}
for $\Lb \in \Lc^n$ of rank $n-1$. For $\Lb \in \Lc^n$ of lower rank, we define $\Phi_p(\Lb) = 0$.
$A$-, $D$- and $E$-optimality are obtained for $p=1$, $p=0$ and $p=\infty$, respectively. The class $\Phi_p$ highlights the connection between the three most popular measures, because the limit of $\Phi_p(\Lb)$ is $\Phi_0(\Lb)$ and $\Phi_{\infty}(\Lb)$ as $p \to 0_+$ and $p \to \infty$, respectively (see Sections 6.6 and 6.7 in \cite{puk}). Similarly, they also provide ``compromise'' measures between $A$-, $D$- and $E$-optimality. Equivalent measures were provided for dynamical networks \cite{SiamiMotee, SiamiMotee18}, but only for $p \geq 1$. 

For connected networks, it can be convenient to equivalently express Kiefer's criteria for $p \in (0, \infty)$ as
\begin{equation*}
\Phi_p(\Lb) = \left(\frac{\tr((\Lb^+)^{p})}{n-1}\right)^{-1/p}.
\end{equation*}
Another reformulation of the criteria, which is particularly useful for practical computations, is based on the identity $\Lb^+ = (\Lb + \J_n/n)^{-1} - \J_n/n$ for connected networks \cite{GhoshEA08}. Then, the matrix $\N := \Lb + \J_n/n$ has the same positive eigenvalues as $\Lb$, with the addition of one eigenvalue equal to one. Consequently, $\Phi_p(\Lb)$ for $\Lb$ of rank $n-1$ can be expressed in terms of $\N$:
\begin{equation}\label{ePhipB}
	\Phi_p(\Lb) = \begin{cases}
		\det(\N)^{1/(n-1)}, &p=0; \\
		(\frac{1}{n-1} (\tr(\N^{-p}) - 1 ))^{-1/p}, &p \in (0, \infty); \\
		\lambda_{n-1}(\N - \J_n/n), &p = \infty.
	\end{cases}
\end{equation}
Note that $\lambda_{n-1}(\N - \J_n/n)$ is simply the smallest eigenvalue of $\N$ after discarding one eigenvalue exactly equal to one.

\section{Regular reformulation}
\label{ssRegular}

The Laplacian matrix is always singular, which complicates the analysis of \eqref{ePhiopt}; we therefore propose a regular reformulation for eigenvalue-based criteria. We find that this reformulation is more convenient than using the matrix $\N$ above for the theoretical analysis of such criteria, especially for translating ED results to ND. Let $\K \in \R^{n \times (n-1)}$ be any fixed matrix whose columns are orthonormal and all orthogonal to $\1_n$. Then, instead of optimizing the eigenvalues of $\Lb$, one can equivalently optimize the eigenvalues of $\widetilde{\Lb}:=\K^T\Lb\K$, which is equal to $(\K^T\Lb^+\K)^{-1}$ for connected networks, as stated in the following theorem.

\begin{theorem}\label{tReg}
	Let $\Lb$ be a Laplacian of rank $n-1$, let $\K \in \R^{n \times (n-1)}$ be a matrix with orthonormal columns that are all orthogonal to $\1_n$, and $\widetilde{\Lb} = \K^T \Lb \K$. Then (a) $\widetilde{\Lb}$ is a positive definite matrix, (b) $\K^T\Lb^+\K$ is positive definite, and $\widetilde{\Lb} = (\K^T\Lb^+\K)^{-1}$, and (c) the positive eigenvalues of $\Lb$ and $\widetilde{\Lb}$ are the same, including multiplicities.
\end{theorem}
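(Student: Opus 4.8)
The plan is to reduce all three claims to a single orthogonal change of basis that splits off the null direction $\1_n$. Set $\ub = \1_n/\sqrt{n}$ and $\U = [\K \mid \ub] \in \R^{n\times n}$. The hypotheses on $\K$ say exactly that the columns of $\U$ are orthonormal, so $\U$ is orthogonal, giving $\K^T\K = \I_{n-1}$, $\K^T\ub = \0$, and $\K\K^T + \ub\ub^T = \I_n$ (equivalently $\K\K^T = \I_n - \J_n/n$). Because $\rank(\Lb) = n-1$ and $\Lb\1_n = \0$, the null space of $\Lb$ is exactly $\mathrm{span}(\1_n)$, so $\C(\Lb) = \1_n^\perp = \C(\K)$; this identification is what ties $\K$ to $\Lb$.

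First I would establish (a) and (c) simultaneously from one block decomposition. Since $\Lb\ub = \Lb\1_n/\sqrt n = \0$, conjugating by $\U$ gives
\[
\U^T\Lb\U = \begin{pmatrix} \K^T\Lb\K & \K^T\Lb\ub \\ \ub^T\Lb\K & \ub^T\Lb\ub \end{pmatrix} = \begin{pmatrix} \widetilde{\Lb} & \0 \\ \0 & 0 \end{pmatrix},
\]
because every block involving $\ub$ vanishes. Orthogonal similarity preserves the spectrum with multiplicities, so the eigenvalues of $\Lb$ are those of $\widetilde{\Lb}$ together with a single extra $0$. As $\Lb$ has exactly one zero eigenvalue, all $n-1$ eigenvalues of $\widetilde{\Lb}$ must be positive, which gives both the positive definiteness in (a) and the equality of the positive spectra in (c). (Alternatively, (a) follows directly: for $\y\neq\0$, $\K\y$ is a nonzero vector in $\1_n^\perp$, hence outside the null space of $\Lb$, so $\y^T\widetilde{\Lb}\y = (\K\y)^T\Lb(\K\y) > 0$.)

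For (b) I would run the same argument on $\Lb^+$. A symmetric matrix and its pseudoinverse share a null space, so $\Lb^+\ub = \0$ and the same vanishing of the $\ub$-blocks yields $\U^T\Lb^+\U = \diag(\K^T\Lb^+\K,\,0)$. Since conjugation by the orthogonal $\U$ commutes with the Moore--Penrose inverse, $\U^T\Lb^+\U = (\U^T\Lb\U)^+$, and the pseudoinverse of $\diag(\widetilde{\Lb}, 0)$ is $\diag(\widetilde{\Lb}^{-1}, 0)$ because $\widetilde{\Lb}$ is invertible by (a). Matching blocks gives $\K^T\Lb^+\K = \widetilde{\Lb}^{-1}$, which is positive definite as the inverse of a positive definite matrix. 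A self-contained alternative avoids the pseudoinverse bookkeeping: using $\K\K^T = \I_n - \J_n/n$ and $\Lb\ub = \0$, one computes $\widetilde{\Lb}(\K^T\Lb^+\K) = \K^T\Lb\K\K^T\Lb^+\K = \K^T\Lb\Lb^+\K$, and since $\Lb\Lb^+ = \Lb^0 = \K\K^T$ is the projector onto $\C(\Lb) = \1_n^\perp$, this collapses to $\K^T\K\K^T\K = \I_{n-1}$.

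The argument is essentially routine once the orthogonal completion $\U = [\K\mid\1_n/\sqrt n]$ is in place, since it converts all three claims into statements about the harmless block-diagonal matrix $\diag(\widetilde{\Lb}, 0)$. The one step that deserves care is the identity $\Lb\Lb^+ = \K\K^T$ (equivalently, that $\Lb^+$ inverts $\Lb$ on $\1_n^\perp$ and annihilates $\1_n$): this is where the rank-$(n-1)$ hypothesis is genuinely used, guaranteeing that $\C(\Lb)$ is the full orthogonal complement of $\1_n$ rather than a proper subspace.
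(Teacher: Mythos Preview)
Your proof is correct. The route differs from the paper's in packaging more than in substance, but the difference is worth noting. You build the orthogonal completion $\U=[\K\mid \1_n/\sqrt n]$ and read everything off the block form $\U^T\Lb\U=\diag(\widetilde\Lb,0)$: (a) and (c) follow at once from orthogonal similarity, and (b) from $(\U^T\Lb\U)^+=\U^T\Lb^+\U$. The paper instead handles (c) via the $AB$/$BA$ nonzero-eigenvalue correspondence (positive eigenvalues of $\K^T\Lb\K$ equal those of $\Lb\K\K^T=\Lb$), and for (b) writes $\K=\Lb\Z$ to reduce $\K^T\Lb^+\K\,\widetilde\Lb$ to $\Z^T\Lb\Lb^+\Lb\K=\K^T\K$. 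Your block-diagonal viewpoint is more unified and arguably more transparent; the paper's argument avoids invoking the pseudoinverse--orthogonal compatibility fact but is slightly more ad hoc. Your self-contained alternative for (b) using $\Lb\Lb^+=\K\K^T$ is essentially the paper's computation rephrased, so either way you are on solid ground.
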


Proof of Theorem \ref{tReg} and all other proofs are deferred to Appendix \ref{apProofs}. From a statistical point of view, the matrix $\K^T\Lb\K$ corresponds to the information matrix of a reparametrized model with $\widetilde{\x}_e = \K^T \x_e$, and $(\K^T\Lb^+\K)^{-1}$ is the information matrix in the original model, but for the parameters of interest $\K^T\betab$ (cf. \cite{puk}, Section 3). By Theorem \ref{tReg}, we have $\Phi_p(\Lb) = \widetilde{\Phi}_p(\widetilde{\Lb})$ for any $\Lb \in \Lc^n$ of rank $n-1$, where
$\widetilde{\Phi}_p$ is the full-rank version of the $\Phi_p$-criterion:
\begin{equation*}
\widetilde{\Phi}_p(\widetilde{\Lb}) = \begin{cases}
	(\det(\widetilde{\Lb}))^{1/(n-1)}, &p=0; \\
	(\frac{1}{n-1} \tr(\widetilde{\Lb}^{-p}))^{-1/p}, &p \in (0, \infty); \\
	\widetilde{\lambda}_{n-1}(\widetilde{\Lb}), &p = \infty,
\end{cases}
\end{equation*}
where $\widetilde{\lambda}_{n-1}(\widetilde{\Lb})$ is the smallest eigenvalue of the non-singular $\widetilde{\Lb} \in \R^{(n-1) \times (n-1)}$. Consequently, $E$ is $\Phi_p$-optimal if and only if it is optimal for the regularized problem
\begin{align}
	\max_{E} \quad &\widetilde{\Phi}_p(\widetilde{\Lb}_E) \label{eOptReg}\\
	s.t. \quad & \widetilde{\Lb}_E = \K^T\Lb_0\K + \sum_{e \in E} \widetilde{\x}_e\widetilde{\x}_e^T, \notag \\
	\quad &  E \subseteq {V \choose 2} \setminus E_0; \quad \lvert E \rvert = N. \notag
\end{align} 

\section{Directional derivatives}
\label{ssDirDer}

The reparametrization from Section \ref{ssRegular} enables computation of the directional derivative of any spectral measure $\Phi$ at $\Lb_1$ in the direction of $\Lb_2$ (more properly, in the direction $\Lb_2-\Lb_1$):
\begin{equation*}
\partial\Phi(\Lb_1, \Lb_2) = \lim_{\alpha \to 0_+} \frac{\Phi(\Lb_1 + \alpha (\Lb_2 - \Lb_1)) - \Phi(\Lb_1)}{\alpha},
\end{equation*}
where $\Lb_1$ and $\Lb_2$ are Laplacians. Directional derivatives make it simple to apply the tools of differential calculus to spectral measures, despite their domain being restricted to the cone $\Lc^n$. An explicit expression for $\Phi_p$, $p \neq \infty$, is given in the following theorem, with results for $p = \infty$ presented later.

\begin{theorem}\label{tDirDer}
	Let $p \in [0, \infty)$, let $\Lb_1$ be the Laplacian of a connected graph and let $\Lb_2$ be a Laplacian. Then,
	\begin{equation}\label{ePartialDer}
		\partial\Phi_p(\Lb_1, \Lb_2) = \Phi_p(\Lb_1) 
		\left(\frac{\tr((\Lb_1^+)^{1+p}\Lb_2)}{\tr((\Lb_1^+)^{p})} - 1\right).
	\end{equation}
\end{theorem}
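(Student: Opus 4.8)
The plan is to reduce the computation to a standard matrix-calculus derivative of the full-rank criterion $\widetilde{\Phi}_p$ and then translate the result back using the identities from Theorem \ref{tReg}. Fix $\K$ as in the statement and write $\widetilde{\Lb}_i = \K^T\Lb_i\K$. First I would check that the segment stays on the rank-$(n-1)$ stratum: since $\Lb_1 + \alpha(\Lb_2-\Lb_1) = (1-\alpha)\Lb_1 + \alpha\Lb_2$ with $1-\alpha>0$ for $\alpha\in[0,1)$, and both $\Lb_1,\Lb_2$ are nonnegative definite with $\ker\Lb_1 = \mathrm{span}(\1_n)$, any null vector $\x$ must satisfy $\x^T\Lb_1\x=0$, hence $\x\in\mathrm{span}(\1_n)$; so the matrix has rank exactly $n-1$ along the segment. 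By Theorem \ref{tReg} we therefore have $\Phi_p(\Lb_1+\alpha(\Lb_2-\Lb_1)) = \widetilde{\Phi}_p(\widetilde{\Lb}_1 + \alpha(\widetilde{\Lb}_2-\widetilde{\Lb}_1))$ for all small $\alpha>0$, with $\widetilde{\Lb}_1$ positive definite. Consequently $\partial\Phi_p(\Lb_1,\Lb_2)$ equals the ordinary directional derivative of the smooth function $\widetilde{\Phi}_p$ at the interior point $\widetilde{\Lb}_1$ in the direction $\E := \widetilde{\Lb}_2 - \widetilde{\Lb}_1$.

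Next I would carry out the matrix-calculus computation on the open cone of positive definite $(n-1)\times(n-1)$ matrices. For $p=0$, Jacobi's formula gives $\partial\det(\widetilde{\Lb}_1)[\E] = \det(\widetilde{\Lb}_1)\,\tr(\widetilde{\Lb}_1^{-1}\E)$, and the chain rule for the exponent $1/(n-1)$ yields $\widetilde{\Phi}_0(\widetilde{\Lb}_1)\big(\tfrac{1}{n-1}\tr(\widetilde{\Lb}_1^{-1}\widetilde{\Lb}_2)-1\big)$, using $\tr(\widetilde{\Lb}_1^{-1}\widetilde{\Lb}_1)=n-1$. For $p\in(0,\infty)$, writing $g(\M)=\tr(\M^{-p})$ so that $\widetilde{\Phi}_p=(n-1)^{1/p}g^{-1/p}$, the key derivative is $\partial g(\widetilde{\Lb}_1)[\E] = -p\,\tr(\widetilde{\Lb}_1^{-p-1}\E)$; combined with the chain rule and the substitution $\E=\widetilde{\Lb}_2-\widetilde{\Lb}_1$ this gives $\widetilde{\Phi}_p(\widetilde{\Lb}_1)\big(\tfrac{\tr(\widetilde{\Lb}_1^{-p-1}\widetilde{\Lb}_2)}{\tr(\widetilde{\Lb}_1^{-p})}-1\big)$. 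Both cases are thus unified as $\widetilde{\Phi}_p(\widetilde{\Lb}_1)\big(\tr(\widetilde{\Lb}_1^{-p-1}\widetilde{\Lb}_2)/\tr(\widetilde{\Lb}_1^{-p})-1\big)$, with the $p=0$ denominator read as $\tr(\widetilde{\Lb}_1^{0})=n-1$.

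The last step is to rewrite the compressed traces in terms of $\Lb_1^+$ and $\Lb_2$. Since the columns of $\K$ span $\1_n^\perp = \C(\Lb_1^+)$ and $\K^T\K=\I_{n-1}$, the projector $\K\K^T = \I_n - \J_n/n$ fixes $\Lb_1^+$, so $\Lb_1^+ = \K(\K^T\Lb_1^+\K)\K^T = \K\,\widetilde{\Lb}_1^{-1}\K^T$ by Theorem \ref{tReg}(b). Iterating with $\K^T\K=\I_{n-1}$ gives $(\Lb_1^+)^r = \K\,\widetilde{\Lb}_1^{-r}\K^T$ for integer $r$, and the same holds for all real $r>0$ by functional calculus, since the vectors $\K\vb_i$ form an orthonormal eigenbasis of $\Lb_1^+$ on $\1_n^\perp$ (where $\vb_i$ are the eigenvectors of $\widetilde{\Lb}_1^{-1}$). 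Hence, by cyclicity of the trace, $\tr((\Lb_1^+)^p)=\tr(\widetilde{\Lb}_1^{-p})$ and $\tr((\Lb_1^+)^{1+p}\Lb_2)=\tr(\widetilde{\Lb}_1^{-1-p}\K^T\Lb_2\K)=\tr(\widetilde{\Lb}_1^{-1-p}\widetilde{\Lb}_2)$. Substituting these identities into the expression from the previous paragraph and replacing $\widetilde{\Phi}_p(\widetilde{\Lb}_1)$ by $\Phi_p(\Lb_1)$ yields exactly \eqref{ePartialDer}.

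I expect the main obstacle to be the rigorous justification of the derivative $\partial\,\tr(\M^{-p})[\E]=-p\,\tr(\M^{-p-1}\E)$ for non-integer $p$, which cannot be obtained by a naive product-rule manipulation. The clean route is spectral: writing $\tr(\M^{-p})=\sum_i\mu_i^{-p}$ for the eigenvalues $\mu_i$ of $\M$, first-order symmetric eigenvalue perturbation gives $\mu_i'(0)=\u_i^T\E\u_i$, so that $\frac{d}{dt}\sum_i\mu_i(t)^{-p}\big|_{0}=-p\sum_i\mu_i^{-p-1}\u_i^T\E\u_i=-p\,\tr(\M^{-p-1}\E)$; since $\tr(g(\M+t\E))$ is smooth even when eigenvalues collide, no genuine non-differentiability arises. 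A secondary point to state carefully is that passing from powers of $\widetilde{\Lb}_1^{-1}$ to powers of $\Lb_1^+$ relies on the support identity $\K\K^T\Lb_1^+ = \Lb_1^+$ together with $\K^T\K=\I_{n-1}$, both of which follow directly from the defining properties of $\K$ and $\C(\Lb_1^+)=\1_n^\perp$.
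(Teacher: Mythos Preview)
Your proposal is correct and follows essentially the same route as the paper: reduce to the regular criterion $\widetilde{\Phi}_p$ via $\widetilde{\Lb}_i=\K^T\Lb_i\K$, compute the directional derivative there, and translate back using the power identity between $\widetilde{\Lb}_1^{-r}$ and $(\Lb_1^+)^r$ together with $\K\K^T\Lb_2=\Lb_2$. The only cosmetic differences are that the paper cites the formula $\partial\widetilde{\Phi}_p(\A,\B)=\widetilde{\Phi}_p(\A)\big(\tr(\A^{-p-1}\B)/\tr(\A^{-p})-1\big)$ from \cite{Gaffke85} rather than deriving it, and states the power identity as $(\K^T\Lb\K)^{-p}=\K^T(\Lb^+)^p\K$ (its Lemma~\ref{lPowers}) rather than in your equivalent form $(\Lb_1^+)^r=\K\,\widetilde{\Lb}_1^{-r}\K^T$.
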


Note that for $p=0$, we have $\tr((\Lb_1^+)^{p}) = n-1$, because $(\Lb_1^+)^0$ is the projection matrix to $\1_n^\perp$. 
Of particular importance is the directional derivative in the direction of $\Lb_2=\Lb_1 + \x_e\x_e^T$. Intuitively, this directional derivative represents the rate of change in the function $\Phi_p$ when an ``infinitesimally small'' edge $e$ is added to the network, corresponding to updating $\Lb_1$ to $\Lb_1 + \delta \x_e\x_e^T$ for a small $\delta>0$.

\begin{cor}
	Let $p \in [0, \infty)$, let $\Lb_1$ be the Laplacian of a connected graph, and let $e \in {V \choose 2} \setminus E_0$. Then
	\begin{equation}\label{ePartialDerR1}
		\partial\Phi_p(\Lb_1, \Lb_1 + \x_e\x_e^T) = \frac{(\Phi_p(\Lb_1))^{1+p}}{n-1}\x_e^T(\Lb_1^+)^{1+p}\x_e.
	\end{equation}
\end{cor}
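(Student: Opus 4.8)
The plan is to obtain this corollary as a direct specialization of Theorem \ref{tDirDer}, setting $\Lb_2 = \Lb_1 + \x_e\x_e^T$ and simplifying. First I would substitute this direction into \eqref{ePartialDer} and expand the numerator trace by linearity,
$$\tr((\Lb_1^+)^{1+p}(\Lb_1 + \x_e\x_e^T)) = \tr((\Lb_1^+)^{1+p}\Lb_1) + \x_e^T(\Lb_1^+)^{1+p}\x_e,$$
where the second summand uses the cyclic invariance of the trace to turn $\tr((\Lb_1^+)^{1+p}\x_e\x_e^T)$ into the scalar $\x_e^T(\Lb_1^+)^{1+p}\x_e$.

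The key step is to show $\tr((\Lb_1^+)^{1+p}\Lb_1) = \tr((\Lb_1^+)^{p})$. I would write $(\Lb_1^+)^{1+p}\Lb_1 = (\Lb_1^+)^{p}(\Lb_1^+\Lb_1) = (\Lb_1^+)^{p}\Lb_1^0$, where $\Lb_1^0$ is the orthogonal projection onto $\C(\Lb_1) = \1_n^\perp$ (using that $\Lb_1$ is the Laplacian of a connected graph, so $\rank(\Lb_1) = n-1$). Since $\Lb_1^+$ is symmetric with column and row space both equal to $\C(\Lb_1)$, the same holds for every power $(\Lb_1^+)^{p}$, and multiplying on the right by the projection $\Lb_1^0$ onto $\C(\Lb_1)$ leaves it unchanged: $(\Lb_1^+)^{p}\Lb_1^0 = (\Lb_1^+)^{p}$. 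Taking traces gives the identity, after which the bracket in \eqref{ePartialDer} collapses to $\x_e^T(\Lb_1^+)^{1+p}\x_e / \tr((\Lb_1^+)^{p})$, so that
$$\partial\Phi_p(\Lb_1, \Lb_1 + \x_e\x_e^T) = \Phi_p(\Lb_1)\,\frac{\x_e^T(\Lb_1^+)^{1+p}\x_e}{\tr((\Lb_1^+)^{p})}.$$

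Finally I would eliminate $\tr((\Lb_1^+)^{p})$ using the connected-network reformulation $\Phi_p(\Lb_1) = (\tr((\Lb_1^+)^{p})/(n-1))^{-1/p}$ from Section \ref{sKiefer}, which rearranges to $\tr((\Lb_1^+)^{p}) = (n-1)\Phi_p(\Lb_1)^{-p}$; combining this with the leading factor $\Phi_p(\Lb_1)$ produces exactly $\frac{\Phi_p(\Lb_1)^{1+p}}{n-1}\x_e^T(\Lb_1^+)^{1+p}\x_e$. There is essentially no obstacle beyond this bookkeeping; the only point warranting separate care is the boundary case $p=0$, where the $p>0$ reformulation does not apply directly. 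There one uses instead that $(\Lb_1^+)^0 = \Lb_1^0$, so $\tr((\Lb_1^+)^0) = \rank(\Lb_1) = n-1$, and the chain above reproduces the same formula since $\Phi_0(\Lb_1)^{1+p} = \Phi_0(\Lb_1)$ at $p=0$.
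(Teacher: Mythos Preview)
Your proposal is correct and follows exactly the route the paper implicitly takes: the corollary is stated without proof as a direct specialization of Theorem \ref{tDirDer}, and your computation---expanding the trace, collapsing $\tr((\Lb_1^+)^{1+p}\Lb_1)$ to $\tr((\Lb_1^+)^{p})$ via $\Lb_1^+\Lb_1 = \Lb_1^0$, and then using the identity $\tr((\Lb_1^+)^{p}) = (n-1)\Phi_p(\Lb_1)^{-p}$ (with the $p=0$ case handled by the paper's remark that $(\Lb_1^+)^0$ is the projection onto $\1_n^\perp$)---is precisely the bookkeeping that bridges \eqref{ePartialDer} and \eqref{ePartialDerR1}.
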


The quantity $\x_e^T(\Lb^+)^{1+p}\x_e$ plays an important role in the directional derivatives and also in multiple results below; we denote it as 
\begin{equation}\label{eVarFunp}
	v_{\Lb, p}(e) := \x_e^T(\Lb^+)^{1+p}\x_e = w_{(i, j)} (b_{ii} + b_{jj} - 2 b_{ij}),
\end{equation} 
where $e=(i,j)$, $b_{ij}$ are the entries of $(\Lb^+)^{1+p}$, and the second equality holds because $\x_e = \sqrt{w_{(i, j)}}(\e_i - \e_j)$. The value $v_{\Lb, p}(e)$ measures Laplacian-based distance between nodes $i$ and $j$, and we therefore call it node dissimilarity. In particular, $v_{\Lb, 0}(e) = \x_e^T \Lb^+ \x_e$ is the effective resistance (also called resistance distance \cite{Bapat}) between nodes $i$ and $j$ for $D$-optimality; and $v_{\Lb, 1}(e) = \x_e^T (\Lb^+)^2 \x_e$ is the squared biharmonic distance (cf. \cite{BlackEA24}) between $i$ and $j$ for $A$-optimality. In ED, $v_{\Lb, 0}(e)$ is called the variance (or sensitivity) function, because it is related to the variances of predicted responses. For brevity, we also use $v_{E,p}$ to denote $v_{\Lb_E, p}$. 

Because $E$-optimality is generally non-differentiable ($\Phi_\infty$ is non-differentiable at $\Lb$ if $\lambda_{n-1}(\Lb)$ has multiplicity greater than one), it requires a separate analysis.

\begin{theorem}\label{tDirDerE}
	Let $\Lb_1$ be the Laplacian of a connected graph and let $\Lb_2$ be a Laplacian. Let $s$ be the multiplicity of $\lambda_{n-1}(\Lb_1)$, let $\ub_1, \ldots, \ub_s$ be a set of orthonormal eigenvectors of $\Lb_1$ corresponding to $\lambda_{n-1}(\Lb_1)$, and let $\U = [\ub_1, \ldots, \ub_s] \in \R^{n \times s}$.
	Then,
	\begin{equation}\label{ePartialEgen}
		\partial\Phi_{\infty}(\Lb_1, \Lb_2) = \lambda_{s}(\U^T\Lb_2\U) - \lambda_{n-1}(\Lb_1).
	\end{equation}
	In particular, if $\lambda_{n-1}(\Lb_1)$ is simple, and $\ub = (u_1, \ldots, u_n)^T$ is the corresponding normalized eigenvector, then
	\begin{equation}
		\partial\Phi_{\infty}(\Lb_1, \Lb_2) = \ub^T \Lb_2 \ub - \lambda_{n-1}(\Lb_1). \label{ePartialE}
	\end{equation}
	Moreover,
	\begin{equation}\label{ePartialDerqE}
		\partial\Phi_{\infty}(\Lb_1, \Lb_1 + \x_e\x_e^T) = 
		\begin{cases}
			(\x_e^T\ub)^2, & \text{if $\lambda_{n-1}(\Lb_1)$ is simple}; \\
			0, & \text{otherwise}
		\end{cases}
	\end{equation}
	for any edge $e \in {V \choose 2} \setminus E_0$.
\end{theorem}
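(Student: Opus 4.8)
The plan is to reduce $\Phi_\infty$ to the smallest eigenvalue of a non-singular matrix via the regular reformulation of Section \ref{ssRegular}, and then invoke the variational (Danskin-type) formula for the directional derivative of $\lambda_{\min}$. By Theorem \ref{tReg}, $\Phi_\infty(\Lb) = \widetilde{\Phi}_\infty(\K^T\Lb\K)$ is the smallest eigenvalue of $\K^T\Lb\K$ whenever $\Lb$ has rank $n-1$. Since $\widetilde{\Lb}_1 = \K^T\Lb_1\K$ is positive definite, the perturbed matrix $\widetilde{\Lb}_1 + \alpha\K^T(\Lb_2-\Lb_1)\K = \K^T(\Lb_1 + \alpha(\Lb_2-\Lb_1))\K$ stays positive definite for small $\alpha > 0$, so $\Phi_\infty(\Lb_1 + \alpha(\Lb_2-\Lb_1)) = \lambda_{\min}(\widetilde{\Lb}_1 + \alpha\K^T(\Lb_2-\Lb_1)\K)$. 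It therefore suffices to differentiate $\lambda_{\min}$ at $\widetilde{\Lb}_1$ in the direction $\K^T(\Lb_2-\Lb_1)\K$.

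The key tool is the characterization $\lambda_{\min}(\A) = \min_{\|\x\|=1}\x^T\A\x$, which exhibits $\lambda_{\min}$ as a pointwise minimum of functions that are linear in $\A$, hence concave. Danskin's theorem then gives the one-sided directional derivative of $\lambda_{\min}$ at $\A$ in the direction $\B$ as $\min\{\x^T\B\x : \|\x\|=1,\ \A\x = \lambda_{\min}(\A)\x\}$, which equals $\lambda_{\min}(\P^T\B\P)$ for any matrix $\P$ whose orthonormal columns span the eigenspace of $\lambda_{\min}(\A)$. Applied to $\A = \widetilde{\Lb}_1$, I would take $\widetilde{\U} = \K^T\U$: since each column $\ub_i$ satisfies $\ub_i \perp \1_n$, we have $\K\K^T\ub_i = \ub_i$, so $\K^T\ub_i$ is a unit eigenvector of $\widetilde{\Lb}_1$ for its smallest eigenvalue $\lambda_{n-1}(\Lb_1)$, and $\widetilde{\U}$ has orthonormal columns spanning the required eigenspace.

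It remains to simplify $\widetilde{\U}^T\K^T(\Lb_2-\Lb_1)\K\widetilde{\U} = \U^T\K\K^T(\Lb_2-\Lb_1)\K\K^T\U$. Using $\K\K^T = \I_n - \J_n/n$ together with $(\Lb_2-\Lb_1)\1_n = \0$, the projections act trivially and this reduces to $\U^T(\Lb_2-\Lb_1)\U$. Because the columns of $\U$ are $\lambda_{n-1}(\Lb_1)$-eigenvectors of $\Lb_1$, we have $\U^T\Lb_1\U = \lambda_{n-1}(\Lb_1)\I_s$, hence $\partial\Phi_\infty(\Lb_1,\Lb_2) = \lambda_{\min}(\U^T\Lb_2\U - \lambda_{n-1}(\Lb_1)\I_s) = \lambda_s(\U^T\Lb_2\U) - \lambda_{n-1}(\Lb_1)$, which is \eqref{ePartialEgen}. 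The simple-eigenvalue formula \eqref{ePartialE} is the specialization $s=1$. For \eqref{ePartialDerqE} I substitute $\Lb_2 = \Lb_1 + \x_e\x_e^T$, obtaining $\lambda_s(\lambda_{n-1}(\Lb_1)\I_s + (\U^T\x_e)(\U^T\x_e)^T) - \lambda_{n-1}(\Lb_1) = \lambda_s((\U^T\x_e)(\U^T\x_e)^T)$; the rank-one matrix $(\U^T\x_e)(\U^T\x_e)^T$ has smallest eigenvalue $(\x_e^T\ub)^2$ when $s=1$ and $0$ when $s \geq 2$.

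The main obstacle is justifying the $\lambda_{\min}$ directional-derivative formula in the degenerate case: $\lambda_{n-1}$ is genuinely non-differentiable when $s>1$, so one cannot naively differentiate a single eigenvector. This is exactly what the concavity/Danskin argument (equivalently, first-order degenerate perturbation theory) resolves, and it is the only non-routine ingredient; once the formula is available, the remaining steps are the elementary projection identity and the rank-one eigenvalue computation carried out above.
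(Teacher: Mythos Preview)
Your proof is correct and reaches the result by a genuinely different route than the paper. The paper does not use the regular reformulation here; instead it extends $\Phi_\infty$ to a concave function $\Psi(\A)=\lambda_{n-1}(\A)+\lambda_n(\A)$ on all of $\Sc^n$, invokes an external characterization of the superdifferential of $\Psi$ at a rank-$(n-1)$ Laplacian (Lemma~\ref{lSuperE}, taken from \cite{Harman04}), and then computes $\partial\Psi(\Lb_1,\Lb_2)=\inf_{\G\in S_\Psi(\Lb_1)}\tr(\G(\Lb_2-\Lb_1))$ via Rockafellar's formula. Your approach instead passes to the nonsingular $(n-1)\times(n-1)$ matrix $\widetilde{\Lb}_1=\K^T\Lb_1\K$ and applies Danskin's theorem directly to $\lambda_{\min}(\cdot)=\min_{\lVert\x\rVert=1}\x^T(\cdot)\x$, then pulls the result back using $\K\K^T\U=\U$. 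The underlying convex-analytic content is the same (Danskin's theorem is precisely the superdifferential formula for a pointwise minimum of linear functions), but your argument is more self-contained: it uses only Theorem~\ref{tReg} from the paper plus the standard $\lambda_{\min}$ perturbation formula, avoiding the auxiliary extension $\Psi$ and the external lemma. The paper's route, on the other hand, yields as a by-product an explicit description of the supergradients $n^{-1}\J_n+\sum_i\gamma_i\ub_i\ub_i^T$, which it later uses to interpret the ``representative'' supergradient employed in algorithms.
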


As demonstrated in Theorem \ref{tDirDerE}, the behavior of $E$-optimality at $\Lb_1$ changes depending on the multiplicity of $\lambda_{n-1}(\Lb_1)$.
Let $\ub$ be a Fiedler vector of $\Lb_1$ (i.e., one of the normalized eigenvectors corresponding to $\lambda_{n-1}(\Lb_1)$). Then, $n^{-1}\J_n + \ub\ub^T$ is one of the supergradients used in the construction of $\partial\Phi_{\infty}(\Lb_1, \Lb_2)$, whether $\lambda_{n-1}(\Lb_1)$ is simple or not (see the proof of Theorem \ref{tDirDerE}). This supergradient is consequently sometimes used as a ``representative'' supergradient in ND algorithms \cite{DohertyEA22}, \cite{DohertyEA24}. The corresponding value $b:=\ub^T\Lb_2\ub - \lambda_{n-1}(\Lb_1)$ equals $\partial\Phi_{\infty}(\Lb_1, \Lb_2)$ when $\lambda_{n-1}(\Lb_1)$ is simple, and $b$ is an upper bound on this directional derivative when $\lambda_{n-1}(\Lb_1)$ has a greater multiplicity. For $\Lb_2=\Lb_1 + \x_e\x_e^T$, $b$ becomes $(\x_e^T\ub)^2$. Consequently, we define the node dissimilarity using this representative value for directional derivatives:
\begin{equation}
	v_{\Lb, \infty}(e) = (\x_e^T\ub)^2 = w_{(i, j)} (u_i - u_j)^2,
\end{equation}
where $\ub$ is a fixed normalized eigenvector corresponding to $\lambda_{n-1}(\Lb)$, and $e = (i, j)$. As with $D$- and $A$-optimality, $v_{\Lb, \infty}(e)$ has a nice graph interpretation: it is the square of the Fiedler distance of nodes $i$ and $j$ \cite{Oliveira}.

As shown in \eqref{ePartialDerqE}, when $\lambda_{n-1}(\Lb_1)$ is not simple, $\partial\Phi_{\infty}(\Lb_1, \Lb_1 + \x_e\x_e^T) = 0$, which is not particularly useful (intuitively, a rank-one update is not enough to increase multiple equal eigenvalues), and $v_{\Lb, \infty}(e)$ is generally not equal to this directional derivative. This suggests that the use of $v_{\Lb, \infty}(e)$ (or of the corresponding supergradient, as in \cite{GhoshBoyd}, \cite{DohertyEA22}, \cite{DohertyEA24}) may not be ideal for constructing algorithms for $E$-optimal networks for problems in which $\lambda_{n-1}(\Lb)$ can be expected to achieve greater multiplicity during the run of the algorithm. Despite that, we still employ this approach in the $E$-optimality algorithms in Section \ref{sAlgs}, and we leave other methods to future research.

\section{Algorithms}

Suppose that the network $E$ is connected. As noted in Section \ref{sKiefer}, $\Lb^+_E = (\Lb_E + \J_n/n)^{-1} - \J_n/n$, and we base all computations on $\N_E := \Lb_E + \J_n/n$. The expression for $\Phi_p(\Lb_E)$ is given in \eqref{ePhipB}; $v_{E, p}(e) = \x_e^T\N_E^{-1-p}\x_e$ and $v_{E, \infty}(e) = (\x_e^T\ub)^2$, where $\ub$ is a normalized eigenvector corresponding to $\lambda_{n-1}(\N_E - \J_n/n)$ (which is the smallest eigenvalue of $\N_E$ after discarding one eigenvalue exactly equal to one).

\subsection{Rank-one updates for $\Phi_p$}\label{ssRankOne}

Network optimization methods often modify the network by adding or removing edges, or by exchanging pairs of edges \cite{PredariEA, ChanAkoglu, HackettEA}. Such changes result in rank-one updates of the Laplacian, or in pairs of rank-one updates. We show that for Kiefer's measures with integer values of $p$, all relevant quantities can then also be effectively updated; thus significantly reducing the computation time.

The addition or the removal of an edge $e$ results in updating $\N_E$ to $\N_E \pm \x_e\x_e^T$. The update of $\N_E^{-1}$ can be performed using the Sherman-Morrison formula:
\begin{equation}\label{eRankOneUpdate}
	(\N_E \pm \x_e\x_e^T)^{-1} 
	= \N_E^{-1} \mp \frac{\N_E^{-1}\x_e\x_e^T\N_E^{-1}}{1 \pm v_{E, 0}(e)}.
\end{equation}
In the case of $\N_E-\x_e\x_e^T$, \eqref{eRankOneUpdate} can only be used if $\N_E-\x_e\x_e^T$ is non-singular, i.e., if $v_{E, 0}(e) \neq 1$ (similarly for \eqref{eAoptUpdate} below). The measures of $A$- and $D$-optimality can also be easily updated using \eqref{eRankOneUpdate} and the Matrix determinant lemma (see \cite{Seber}, 15.10(d)), respectively:
\begin{equation}\label{eAoptUpdate}
	\tr((\N_E \pm\x_e\x_e^T)^{-1}) 
	= \tr(\N_E^{-1}) \mp \frac{v_{E, 1}(e)}{1 \pm v_{E, 0}(e)},
\end{equation}
\begin{equation}\label{eDoptUpdate}
	\det(\N_E \pm\x_e\x_e^T) = (1 \pm v_{E, 0}(e))\det(\N_E).
\end{equation}
That is, the rank-one updates to $\N_E^{-1}$ can easily be computed in $O(n^2)$ instead of $O(n^3)$ operations, which would result from a naive implementation. The updates to $\Phi_0(\Lb_E)$ and $\Phi_1(\Lb_E)$ can be performed in just $O(1)$ operations, because the calculation of each node dissimilarity $v_{E, j}(e)$ requires only $O(1)$ operations (see \eqref{eVarFunp}), provided that $\N_E^{-j}$ has already been computed.

For general $\Phi_p$ measures $(p \in (0,\infty))$, the formula for $\Phi_p(\Lb_E)$ involves $\N_E^{-p}$, which requires $O(n^3)$ operations. For integer $p$, however, the rank-one updates to $\Phi_p$ can be performed using the following theorem.
\begin{theorem}\label{tPowerTrace}
	Let $E$ be a connected network and let $e$ be an edge. Then,
	\begin{equation}\label{ePowerTrace}
		\tr((\N_E \pm \x_e\x_e^T)^{-p}) = \tr(\N_E^{-p}) + \sum_{k=1}^p \frac{(\mp 1)^k}{(1 \pm v_{E, 0}(e))^k} \sum_{\substack{i_0, \ldots, i_k \geq 0 \\ i_0 + \ldots + i_k = p-k}} v_{E, i_0+i_k+1}(e) \left( \prod_{j=1}^{k-1} v_{E, i_j + 1}(e)\right) 
	\end{equation}
	for any positive integer $p$. The last product in \eqref{ePowerTrace} is understood to be one for $k=1$; and for $\tr((\N_E - \x_e\x_e^T)^{-p})$, the formula is only valid if $v_{E,0}(e) \neq 1$ (i.e., if $E \setminus \{e\}$ is still connected).
\end{theorem}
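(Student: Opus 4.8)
The plan is to expand $\tr((\N_E \pm \x_e\x_e^T)^{-p})$ by repeatedly applying the Sherman-Morrison formula \eqref{eRankOneUpdate}, then tracking the resulting terms combinatorially. Writing $\M := (\N_E \pm \x_e\x_e^T)^{-1}$, formula \eqref{eRankOneUpdate} gives $\M = \N_E^{-1} \mp c\, \N_E^{-1}\x_e\x_e^T\N_E^{-1}$, where $c = 1/(1 \pm v_{E,0}(e))$. Raising this to the $p$-th power and taking the trace produces a sum of $2^p$ products; in each factor I either keep $\N_E^{-1}$ or insert the rank-one piece $\mp c\,\N_E^{-1}\x_e\x_e^T\N_E^{-1}$.

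First I would observe that each occurrence of the rank-one piece contributes a scalar $\x_e^T\N_E^{-1}\x_e = v_{E,0}(e)$ wherever two such pieces are adjacent (the inner $\x_e^T \cdots \x_e$ collapses to a node dissimilarity), so the products telescope into strings of node dissimilarities. Concretely, if I group a term in the trace by the number $k$ of inserted rank-one pieces, the cyclic trace structure $\tr(\cdots)$ turns a configuration with $k$ insertions separated by blocks of $i_0, i_1, \ldots, i_k$ factors of $\N_E^{-1}$ (with $i_0 + \cdots + i_k = p-k$) into a single ``wrap-around'' factor $\x_e^T\N_E^{-(i_0+i_k+1)}\x_e = v_{E, i_0+i_k+1}(e)$ times a product $\prod_{j=1}^{k-1} v_{E, i_j+1}(e)$ of the intermediate strings. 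The scalar prefactor is $(\mp c)^k = (\mp 1)^k/(1 \pm v_{E,0}(e))^k$, matching \eqref{ePowerTrace}, and the $k=0$ term is exactly $\tr(\N_E^{-p})$.

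The \textbf{main obstacle} is the careful bookkeeping of the cyclic trace: unlike a linear product, the trace identifies the first and last blocks of $\N_E^{-1}$, which is why the formula singles out $i_0 + i_k + 1$ rather than treating all blocks symmetrically. I would make this rigorous either by an induction on $p$ (peeling off one factor of $\M$ and using the $O(1)$ collapse of $\x_e^T\N_E^{-j}\x_e = v_{E,j}(e)$) or by directly enumerating the compositions of $p-k$ into $k+1$ nonnegative parts and verifying the trace of each monomial. The induction seems cleaner: assuming the claim for $p-1$, I would write $\M^{-p} = \M^{-1}\M^{-(p-1)}$, substitute \eqref{eRankOneUpdate} for the leading $\M^{-1}$, and check that inserting one more $\N_E^{-1}$ block or one more rank-one piece shifts the composition indices exactly as required. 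Finally I would note that the subtractive case simply carries the alternating sign $(\mp 1)^k$ and demands $v_{E,0}(e)\neq 1$ so that $c$ is finite, which corresponds to $E\setminus\{e\}$ remaining connected by Theorem \ref{tReg}(c).
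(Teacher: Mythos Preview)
Your approach is correct and essentially coincides with the paper's: the paper first applies Sherman--Morrison \eqref{eRankOneUpdate} and then proves a standalone lemma giving $(\A+\ub\vb^T)^p$ and its trace as a sum over compositions $i_0+\cdots+i_k=p-k$ (by the same induction on $p$ you outline), which is exactly your ``group by the number $k$ of rank-one insertions'' argument. Two small slips: you write $\M^{-p}=\M^{-1}\M^{-(p-1)}$ where you mean $\M^{p}=\M\cdot\M^{p-1}$, and the equivalence $v_{E,0}(e)\neq 1 \Leftrightarrow E\setminus\{e\}$ connected is not a consequence of Theorem~\ref{tReg}(c) but rather the standard bridge/effective-resistance fact.
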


When $\N_E^{-1}, \ldots, \N_E^{-p-1}$ have already been computed, the update of $\Phi_p(\Lb_E)$ using Theorem \ref{tPowerTrace} takes only $O(1)$ operations, as for $D$- and $A$-optimality. Efficient updates of $\N_E^{-p}$ can be done by employing the following theorem. For brevity, in this section, we understand the zeroth power of any $\A \in \R^{n \times n}$ as $\A^0 = \I$, in contrast to the rest of the paper.

\begin{theorem}\label{tPowerInv}
	Let $E$ be a connected network, let $e$ be an edge, and let $p$ be a positive integer. Then,
	\begin{equation}\label{ePowerInv}
		(\N_E \pm \x_e\x_e^T)^{-p} = \N_E^{-p} \mp \sum_{j=0}^{p-1} \left(\N_E^{-1} \mp \frac{\N_E^{-1}\x_e\x_e^T\N_E^{-1}}{1 \pm v_{E, 0}(e)} \right)^j \frac{\N_E^{-1}\x_e\x_e^T\N_E^{-(p-j)}}{1 \pm v_{E, 0}(e)}.
	\end{equation}
	For $(\N_E - \x_e\x_e^T)^{-p}$, \eqref{ePowerInv} is only valid if $v_{E,0}(e) \neq 1$ (i.e., if $E \setminus \{e\}$ is still connected).
\end{theorem}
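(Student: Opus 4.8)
The plan is to reduce \eqref{ePowerInv} to a single matrix telescoping identity, using the rank-one inverse formula \eqref{eRankOneUpdate} as the only structural input. Abbreviate $\N := \N_E$, $\x := \x_e$, $v := v_{E,0}(e) = \x^T\N^{-1}\x$, and set $\M := (\N \pm \x\x^T)^{-1}$, so that the target quantity is $\M^p$. Since $E$ is connected, $\N$ is positive definite; moreover $v \geq 0$, so $1 + v > 0$ always, and in the subtractive case the hypothesis $v \neq 1$ forces $1 - v > 0$. Thus \eqref{eRankOneUpdate} applies and $\M$ is well defined, giving
\begin{equation*}
\M = \N^{-1} \mp \frac{\N^{-1}\x\x^T\N^{-1}}{1 \pm v}, \qquad \M - \N^{-1} = \mp\,\frac{\N^{-1}\x\x^T\N^{-1}}{1 \pm v}.
\end{equation*}

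The key step is the telescoping identity
\begin{equation*}
\M^p - \N^{-p} = \sum_{j=0}^{p-1} \M^{j}\,(\M - \N^{-1})\,\N^{-(p-1-j)},
\end{equation*}
which holds for \emph{any} two matrices (no commutativity needed) and any positive integer $p$ under the stated convention $\N^0 = \M^0 = \I$. I would justify it by rewriting each summand as $\M^{j+1}\N^{-(p-1-j)} - \M^{j}\N^{-(p-j)}$, so that the partial sums of $a_j := \M^{j}\N^{-(p-j)}$ telescope to $a_p - a_0 = \M^p - \N^{-p}$.

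Substituting the expression for $\M - \N^{-1}$ above and collapsing $\N^{-1}\N^{-(p-1-j)} = \N^{-(p-j)}$ yields
\begin{equation*}
(\N \pm \x\x^T)^{-p} = \N^{-p} \mp \sum_{j=0}^{p-1} \M^{j}\,\frac{\N^{-1}\x\x^T\N^{-(p-j)}}{1 \pm v},
\end{equation*}
and replacing each $\M^{j}$ by its Sherman--Morrison form $\bigl(\N^{-1} \mp \N^{-1}\x\x^T\N^{-1}/(1 \pm v)\bigr)^{j}$ reproduces \eqref{ePowerInv} verbatim. No deep obstacle arises: the result is essentially the telescoping identity combined with \eqref{eRankOneUpdate}. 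The only points demanding care are the consistent bookkeeping of the paired $\pm$/$\mp$ signs through the substitution, the endpoint convention $\N^0 = \M^0 = \I$ (so the $j=0$ and $j=p-1$ terms are handled uniformly), and restricting the subtractive formula to the regime $v \neq 1$ so that every factor $(1 - v)^{-1}$ remains finite.
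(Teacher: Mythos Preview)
Your proof is correct and follows essentially the same route as the paper: the paper isolates your telescoping identity as a separate lemma (stated for generic $\A$ and $\ub\vb^T$ and proved by induction rather than by writing out the telescoping sum directly) and then combines it with the Sherman--Morrison formula \eqref{eRankOneUpdate} in exactly the way you describe. One minor slip: the hypothesis $v \neq 1$ yields only $1 - v \neq 0$, not $1 - v > 0$, but this weaker conclusion is all that \eqref{eRankOneUpdate} requires, so the argument is unaffected.
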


The application of Theorem \ref{tPowerInv} requires maintaining the list of matrix powers $\N_E^{-1}, \N_E^{-2}, \ldots, \N_E^{-p}$. Then, the rank-one update for each of these matrix powers can be performed in $O(n^2)$ operations via \eqref{ePowerInv}, by using the matrix powers from the previous iteration and the already computed updated matrix powers; see function  \textsc{EffUpd} in Algorithm \ref{aUpdate}. To employ both Theorems \ref{tPowerTrace} and \ref{tPowerInv}, \textsc{EffUpd} can be used to maintain $\N_E^{-1}, \N_E^{-2}, \ldots, \N_E^{-p-1}$. We note that the update formulas are simple enough to allow a vectorized implementation (i.e., employing matrix and vector operations instead of loops), which can further speed up the computations.

\begin{algorithm}[H]
\caption{Efficient updates of integer powers of $\N_E^{-1}$}\label{aUpdate}
\begin{algorithmic}
\State \textbf{Input:} $p$, $\N_E^{-1}, \ldots, \N_E^{-p}$, $e$, $c \in \{-1,+1\}$
\State \textbf{Output:} $(\N_E + c\,\x_e\x_e^T)^{-1}, \ldots, (\N_E + c\,\x_e\x_e^T)^{-p}$
\Function{EffUpd}{$p$, $\N_E^{-1}, \ldots, \N_E^{-p}$, $e$, $c$}
    \State $\ub \gets \N_E^{-1}\x_e$
    \State $\vb \gets \ub / (1 + c\,v_{E,0}(e))$
    \State $\y_i \gets \N_E^{-1}\vb \quad (i = 1,\ldots,p-1)$
    \State $\B_1 \gets \N_E^{-1} - c\,\ub\vb^T$
    \State $\z_{1} \gets \B_1 \ub$
    \State $\B_2 \gets \N_E^{-2} - c\,\ub\y_{1}^T - c\,\z_{1}\vb^T$
    \For{$r = 3$ to $p$}
        \State $\z_{r-1} \gets \B_{r-1}\,\ub$
        \State $\B_r \gets \N_E^{-r}
                 - c\,\ub \y_{r-1}^T
                 - c\,\z_{r-1}\vb^T
                 - c \displaystyle\sum_{j=1}^{r-2} \z_j\,\y_{\,r-j-1}^T$
    \EndFor
    \State \Return $(\B_1, \ldots, \B_p)$
\EndFunction
\end{algorithmic}
\end{algorithm}

The above formulas show that the rank-one updates corresponding to the addition or removal of edge $e$ are closely tied to the node dissimilarities $v_{E, p}(e)$ for integer $p$. For the case of $E$-optimality, there is no explicit update formula, but $v_{E, \infty}$ approximates the change in $\Phi_\infty$ if $\lambda_{n-1}(\Lb_E)$ is simple: $\Phi_\infty(\Lb_E) = \lambda_{n-1}(\Lb_E) = \ub^T \Lb_E \ub$ and
\begin{equation*}
\Phi_\infty(\Lb_E \pm \x_e\x_e^T) 
\approx \ub^T (\Lb_E \pm \x_e\x_e^T) \ub 
= \lambda_{n-1}(\Lb_E) \pm (\x_e^T\ub)^2 
= \Phi_\infty(\Lb_E) \pm v_{E, \infty}(e),
\end{equation*}
where $\ub$ is the Fiedler vector. A similar approximation holds for other $\Phi_p$ criteria, as $\Phi_p(\Lb_E \pm \x_e\x_e^T) - \Phi_p(\Lb_E) \approx \pm \partial \Phi_p(\Lb_E, \Lb_E \pm \x_e\x_e^T)$. Since $\partial\Phi_p(\Lb_E, \Lb_E \pm \x_e\x_e^T)$ is proportional to $v_{E, p}(e)$, edges with large (or small) values of $v_{\Lb, p}(e)$ are likely to yield a large increase (or small decrease) in the objective when added (or removed). This observation is exploited in our proposed edge-exchange algorithm.

\subsection{Efficient greedy algorithm}
\label{sGreedy}

The greedy algorithms build networks by successively adding edges that maximize the increase in the value of $\Phi$. That is, starting with a connected graph $G=(V, E \cup E_0, \w)$ with $E=\emptyset$, the edge that maximizes $\Phi(\Lb_{E \cup \{e\}})$ over all $e \not\in (E \cup E_0)$ is added to $E$ in each iteration, until $\vert E \vert = N$. In this section, we therefore assume that $E_0$ is connected. 

The time complexity of a single iteration of a naive implementation of the greedy method for $\Phi_p$-optimality involves computing $\Phi(\Lb_E + \x_e\x_e^T)$ for each of the $k$ candidate edges. Because the straightforward computation of $\Phi(\Lb_E + \x_e\x_e^T)$ takes $O(n^3)$ operations, the time complexity of each iteration is $O(kn^3)$, and the overall time complexity of such a method is $O(Nkn^3) = O(Nn^5)$ because $k$ is at most $O(n^2)$.

We formulate a fast version of the greedy algorithm for Kiefer's measures with integer $p$, see Algorithm \ref{aGreedyeff}. The evaluation of $\Phi_p(\Lb_E + \x_e\x_e^T)$ is performed via \eqref{eAoptUpdate} for $A$-optimality, \eqref{eDoptUpdate} for $D$-optimality, and \eqref{ePowerTrace} for other integer $p$, which requires computing $v_{E, 0}(e), \ldots, v_{E, p}(e)$. Because the node dissimilarities $v_{E, j}(e)$ require $\N_E^{-j}$, matrices $\N_E^{-1}, \ldots, \N_E^{-p-1}$ are updated at the end of each iteration, which is done using \eqref{eRankOneUpdate} for $D$-optimality and \textsc{EffUpd} for other integer values of $p$, in $O(n^2)$ operations. The calculation of one node dissimilarity $v_{E, j}(e)$ has then complexity $O(1)$, and all such values for candidate edges $e$ are computed in $O(k)$ operations. The updates to $\Phi_p$ are then also performed in $O(k)$ operations. The time complexity of each iteration is thus reduced from the naive $O(k n^3)$ to $O(k + n^2)$. The initial computation of $\N_E, \N_E^{-1}, \ldots, \N_E^{-p-1}$ requires $O(n^3)$ operations, and the total time complexity of the efficient greedy method is therefore $O(n^3 + Nk + Nn^2)$, which can be simplified to $O(n^3 + Nn^2)$ because $k$ is at most $O(n^2)$; a reduction from the $O(Nkn^3) = O(Nn^5)$ operations for a naive version.

\begin{algorithm}[H]
\caption{Efficient greedy algorithm}\label{aGreedyeff}
\begin{algorithmic}
    \State \textbf{Input:} connected $(V, E_0, \w)$; integer $p$
    \State \textbf{Output:} approximate solution $E$ to \eqref{ePhiopt} with $\Phi=\Phi_p$
    \State $E \gets \emptyset$
    \State Compute $\N_E, \N_E^{-1},\ldots,\N_E^{-p-1}$
    \For{$i = 1$ to $N$}
        \State Compute $v_{E,j}(e)$ for all $e \notin (E \cup E_0)$ and $j = 0,\ldots,p$
        \State Compute $\Phi_p(\N_E + \x_e \x_e^T)$
               using \eqref{eAoptUpdate}, \eqref{eDoptUpdate}, or \eqref{ePowerTrace}
               for all $e \notin (E \cup E_0)$
        \State $e^* \gets \displaystyle\arg\max_{e \notin (E \cup E_0)}
               \Phi_p(\N_E + \x_e \x_e^T)$
        \State $E \gets E \cup \{e^*\}$
        \State $\N_E \gets \N_E + \x_{e^*}\x_{e^*}^T$;
        Compute $\Phi_p(\N_E)$
        \State $(\N_E^{-1},\ldots,\N_E^{-p-1})
               \gets \textsc{EffUpd}(p{+}1, \N_E^{-1},\ldots,\N_E^{-p-1}, e^*, +1)$
    \EndFor
\end{algorithmic}
\end{algorithm}

\subsection{KL exchange algorithm}
\label{sAlgs}

To improve a network that already has the required number of edges, for example one obtained by a greedy algorithm, we propose an edge-exchange heuristic called the KL exchange algorithm. We formulate the method for the $\Phi_p$ measures ($p \in [0, \infty]$), but it can be extended to other functions, provided that the appropriate directional derivatives are known. In each iteration, the algorithm replaces an existing edge with one not currently present in the network. To avoid considering all $O(n^2)$ pairs of edges, the algorithm first finds $K$ edges from $E$, whose removal should result in only a small decrease in $\Phi$, and $L$ edges from $(E \cup E_0)^c$, whose addition should result in a large increase in $\Phi$. Because $\Phi_p(\Lb_E \pm \x_e\x_e^T)$ is approximately proportional to the node dissimilarity $v_{E, p}(e)$ (see Section \ref{ssRankOne}), we consider the edges with the $K$ smallest values of $v_{E, p}(e)$ for removal, and the edges with the $L$ largest values for the addition to the network. The edge pair among the $KL$ candidates that maximizes the increase in $\Phi_p(\Lb_E)$ is then selected; see Algorithm \ref{aKL}. This process is repeated until no increase in $\Phi_p$ can be achieved. Unlike the greedy methods, the KL exchange algorithm can also be used when $(V, E_0, \w)$ is not connected, albeit the input network $(V, E_0 \cup E_1, \w)$, i.e., the initial feasible solution for the algorithm, must be connected. Provided that $(V, E_0, \w)$ is not already connected, this can be achieved, for instance, by utilizing a spanning tree algorithm in the construction of the initial $E_1$.

\begin{algorithm}[H]
\caption{KL exchange algorithm}\label{aKL}
\begin{algorithmic}
    \State \textbf{Input:} initial $E_1$: $(V, E_0 \cup E_1, \w)$ connected; $\delta > 0$; $K$, $L$, $p$
    \State \textbf{Output:} approximate solution $E$ to \eqref{ePhiopt} with $\Phi=\Phi_p$
    \State $E \gets E_1$; $\Lb \gets \Lb_E$
    \State $\textit{cont} \gets \text{TRUE}$
    \While{\textit{cont}}
        \State Compute $v_{E,p}(e)$ for all edges $e$
        \State $I_K \gets$ the $K$ edges in $E$ with smallest $v_{E,p}(e)$
        \State $I_L \gets$ the $L$ edges not in $(E \cup E_0)$ with largest $v_{E,p}(e)$
        \State $(e^*, f^*) \gets \displaystyle\arg\max_{e \in I_K, f \in I_L}\,
            \Phi_p\left(\Lb - \x_e \x_e^T + \x_f \x_f^T\right)$
        \State $\Lb_{\text{temp}} \gets \Lb - \x_{e^*} \x_{e^*}^T + \x_{f^*} \x_{f^*}^T$
        \If{$\Phi_p(\Lb_{\text{temp}}) > (1+\delta)\Phi_p(\Lb)$}
            \State $E \gets (E \cup \{f^*\}) \setminus \{e^*\}$
            \State $\Lb \gets \Lb_{\text{temp}}$
        \Else
            \State $\textit{cont} \gets \text{FALSE}$
        \EndIf
    \EndWhile
\end{algorithmic}
\end{algorithm}

The exchange algorithm in \cite{HackettEA} is analogous to the KL exchange algorithm with $K=1$ (i.e., only the single edge yielding the smallest decrease in $\Phi$ is considered for deletion in each iteration). In contrast, to consider all feasible pairs of edges in each iteration, one can simply set $K=N$ and $L=k-N$; the resulting method is called the Fedorov exchange algorithm in ED \cite{Fedorov}. There also exist numerous modifications of the KL exchange algorithm in ED (e.g., \cite{AtkinsonEA07}, Section 12).

\subsection{Efficient KL exchange algorithm}

To speed up the KL exchange algorithm, we terminate the maximization of $\Phi_p(\Lb_E - \x_e\x_e^T + \x_f\x_f^T)$ prematurely: the exchange is performed once \textit{any} pair of $e \in I_K, f \in I_L$ that results in a non-negligible increase in $\Phi_p$ is found. We also note that because the KL exchange method is a heuristic, a multistart approach (i.e., repeatedly starting the method from different randomly generated $E_1$) can be used to achieve a better network. For integer values of $p$, we provide an efficient version of the KL exchange method (see Algorithm \ref{aKLeff}), which makes use of the update formulas detailed in Section \ref{ssRankOne}. For brevity, in these algorithms, $\Phi_p(\N)$ stands for $\Phi_p(\Lb)$ computed using \eqref{ePhipB}.

In the KL algorithm, $\N_E^{-1}, \ldots, \N_E^{-p-1}$ are initially computed using $O(n^3)$ operations. In each iteration, the values $v_{E, j}(e)$ ($j=0,\ldots, p$) are computed in $O(N + k)$. For each potential addition of edge $f \in I_L$, $\N_E^{-1}, \ldots, \N_E^{-p-1}$ are updated in $O(n^2)$ using \eqref{eRankOneUpdate} or \eqref{ePowerInv}, resulting in $O(Ln^2)$ operations, which allows for updating the $\Phi_p$ measure after the subsequent removals of edges $e \in I_K$. For each $f \in I_L$, the value of $\Phi_p(\Lb - \x_e\x_e^T + \x_f\x_f^T)$ for each $e \in I_K$ is then computed via \eqref{eAoptUpdate}, \eqref{eDoptUpdate}, or \eqref{ePowerTrace}, taking $O(KL)$ operations in total. Once the best exchange is selected, $\N_E^{-1}, \ldots, \N_E^{-p-1}$ are updated in $O(n^2)$. The overall complexity of each iteration is therefore reduced from $O(KLn^3)$ to $O(N + k +Ln^2 + KL + n^2)$, which can be simplified to $O(Ln^2)$ because $k$, $N$ and $K$ are at most $O(n^2)$.

\begin{algorithm}[t]
\caption{Efficient KL exchange algorithm}\label{aKLeff}
\begin{algorithmic}
    \State \textbf{Input:} initial $E_1$: $(V, E_0 \cup E_1, \w)$ is connected; $\delta>0$; $K$, $L$; integer $p$
    \State \textbf{Output:} approximate solution $E$ to \eqref{ePhiopt} with $\Phi=\Phi_p$
    \State $E \gets E_1$;
     $\N \gets \N_E$;
     $\Phi^* \gets \Phi_p(\N)$
    \State $\textit{cont} \gets \text{TRUE}$
    \State $\M_j \gets \N^{-j}$ for $j = 1,\ldots,p+1$
    \While{\textit{cont}}
        \State $\textit{cont} \gets \text{FALSE}$
        \State Compute $v_{E,j}(e)$ for all edges $e$ and for $j = 0,\ldots,p$
        \State $I_K \gets$ the $K$ edges in $E$ with smallest $v_{E,p}(e)$
        \State $I_L \gets$ the $L$ edges not in $(E \cup E_0)$ with largest $v_{E,p}(e)$
        \ForAll{$f \in I_L$}
            \State $\N^{(f)} \gets \N + \x_f\x_f^T$
            \State $(\M_1^{(f)},\ldots,\M_{p+1}^{(f)})
                   \gets \textsc{EffUpd}(p{+}1, \M_1,\ldots,\M_{p+1}, f, +1)$
            \ForAll{$e \in I_K$}
                \State Compute $\Phi^{(e,f)} = \Phi_p(\N^{(f)} - \x_e\x_e^T)$
                       using $\N^{(f)}$, $\M_j^{(f)}$ and
                       \eqref{eAoptUpdate}, \eqref{eDoptUpdate} or \eqref{ePowerTrace}
                \If{$\Phi^{(e,f)} > (1+\delta)\Phi^*$}
                    \State $E \gets (E \cup \{f\}) \setminus \{e\}$
                    \State $\N \gets \N^{(f)} - \x_e\x_e^T$
                    \State $\Phi^* \gets \Phi^{(e,f)}$
                    \State $(\M_1,\ldots,\M_{p+1})
                           \gets \textsc{EffUpd}(p{+}1,
                           \M_1^{(f)},\ldots,\M_{p+1}^{(f)},
                           e, -1)$
                           \State $\textit{cont} \gets \text{TRUE}$
                           \State \textbf{break}
                \EndIf
            \EndFor
            \IIf{\textit{cont}} \textbf{break} \EndIIf
        \EndFor
    \EndWhile
\end{algorithmic}
\end{algorithm}

\section{Examples}
\label{sEx}

All computations in this section were performed in the statistical software R on a 64-bit Windows 11 system with an Intel Core i5-13400F processor.

\subsection{Small network}
\label{ssSmall}

We first demonstrate our results on a small example. We selected $n=10$ nodes and independently generated edge weights $\w$ from the uniform distribution on $(0,1)$. The initial network $(V, E_0, \w)$ was a random tree, and the considered objective was to add $N=5$ edges to this network. We applied the greedy algorithm to construct efficient networks with respect to $D$-, $A$-, $\Phi_{3}$- and $E$-optimality, and then used the KL exchange algorithm (Algorithm \ref{aKLeff}) with $K=N$, $L=36-N$ (i.e., effectively the Fedorov exchange algorithm) to potentially improve the greedy solution. That is, the output of the greedy algorithm was used as the initial solution for Algorithm \ref{aKLeff}. The networks obtained by the exchange method are depicted in Figure \ref{fNetworks}. Because the $\Phi_p$ criteria are information functions, the relative quality of the networks can be meaningfully measured by the ratio of the $\Phi_p$-values, which are shown in Table \ref{tblSmallNetwork}. Despite the simplicity of the problem, the KL exchange algorithm generally managed to improve upon the greedy solution, by up to 27\%.

\begin{figure}[t]
	\centering
	\includegraphics[trim={0cm 4.5cm 0cm 2.0cm}, clip, width = 0.7\columnwidth]{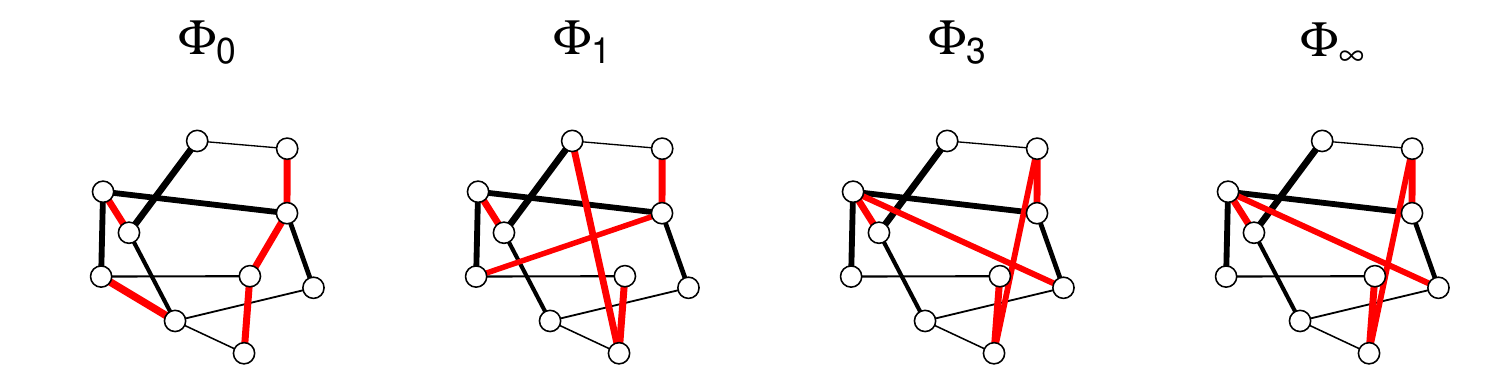} 
	\caption{Networks obtained by the KL exchange algorithm for $D$-, $A$-, $\Phi_{3}$- and $E$-optimality. The black edges denote the original network $E_0$ and the red edges the added $E$. The widths of the edges are proportional to the weights $\w$.}\label{fNetworks}
\end{figure}

\begin{table}[t]
	\centering
    \caption{Results of the greedy ($g$) and KL ($KL$) algorithms for the network in Section \ref{ssSmall}. \label{tblSmallNetwork}}
    \begin{tabular}{c | c c c c}
		\hline
		$p$ & 0 ($D$) & 1 ($A$) & 3 & $\infty$ ($E$) \\ \hline
        $\Phi_p(E_{KL}) / \Phi_p(E_{g})$ &  1.000  & 1.062 & 1.068 & 1.274 \\
        $\Phi_p(E_{KL}) / \Phi_p(E^*)$ &  1.000  & 0.9995 & -- & -- \\
        $\Phi_p(E_{g}) / \Phi_p(E^*)$ &  1.000  & 0.941 & -- & -- \\
        \hline
	\end{tabular}
\end{table}

Table \ref{tblSmallNetwork} also displays, for $D$- and $A$-optimality, the quality of the networks relative to the actually optimal networks (in ED context, this quality relative to the global optimum is called efficiency). The optimal networks were constructed by applying the mixed-integer second-order cone programming formulation \cite{SagnolHarman15} for the regularized ED model with $\widetilde{\x}_e=\K^T\x_e$, using the \texttt{OptimalDesign} package \cite{OptimalDesign} in the R software. The network produced by the greedy method for $D$-optimality is actually $D$-optimal, and the KL algorithm naturally found the same network. For $A$-optimality methods, the obtained networks are not $A$-optimal, but the network obtained by the KL exchange method is nearly optimal, with quality $99.95\%$.

\subsection{Larger networks}
\label{ssLarger}

For each $n=500, 1000, 1500$, we generated edge weights $\w$ from the uniform distribution on $(0,1)$, and constructed an initial network $E_0$ as the union of a random tree on the $n$ nodes and an Erd\H{o}s-R\'enyi graph on the same nodes, with $m=n$ edges. The initial networks therefore were connected and had $n$ nodes, and the number of initial edges $N_0$ was approximately $2n$. The objective was to add $N=500$ edges to $E_0$ in each case. Note that we did not restrict the set of the candidate edges; therefore, the number of candidate edges $k$ was approximately $n(n-5)/2$. These settings are summarized in Table \ref{tblLargeNetworks}. We used the efficient greedy algorithm for $D$-, $A$- and $\Phi_3$-optimality to construct a network, followed by the efficient KL exchange algorithm (with $K=L=20$) to potentially improve the network; see Table \ref{tblLargeResultsD}. We did not consider $E$-optimality here, because we do not have efficient rank-one updates for $E$-optimality and the algorithms therefore tend to become prohibitively slow for such large networks. The results show that Algorithm \ref{aKLeff} is generally able to improve upon the greedy solution in a relatively short time.

\begin{table}[t]
	\centering
    \caption{Characteristics of the networks in Section \ref{ssLarger}. \label{tblLargeNetworks}}
	\begin{tabular}{c | r r r r}
		\hline
		Network & $n$ & $N_0$ & $N$ & $k$ \\ \hline
		1 &  500 & 995 & 500 & 123,755 \\
		2 & 1,000 & 1,998 & 500 & 497,502 \\
		3 & 1,500 & 2,995 & 500 & 1,121,255 \\
		\hline
	\end{tabular}
\end{table}

\begin{table}[t]
	\centering
    \caption{Results of the greedy ($g$) and KL ($KL$) algorithms for the  networks in Section \ref{ssLarger}. \label{tblLargeResultsD}}
	\begin{tabular}{c c | r r c  }
		\hline
		$p$ & Network & $\text{time}_g$ (s) & $\text{time}_{KL}$ (s)  & $\Phi_p(E_{KL}) / \Phi_p(E_{g})$   \\ \hline
		\multirow{3}{*}{0 ($D$)} 
		&1 & 2.99 & 3.56 & 1.0006  \\
		&2 & 23.45 & 11.16 & 1.0001 \\
		&3 & 60.65 & 31.17 & 1.0003  \\
		\hline
		\multirow{3}{*}{1 ($A$)} 
		&1 & 29.63 & 3.75 & 1.0019  \\
		&2 & 56.35 & 4.71 & 1.0004  \\
		&3 & 237.45 & 11.78 & 1.0000  \\
		\hline
		\multirow{3}{*}{3} 
		&1 & 67.50 & 7.98 & 1.0012  \\
		&2 & 174.96 & 17.39 & 1.0003  \\
		&3 & 329.94 & 22.93 & 1.0001 \\
		\hline
	\end{tabular}
\end{table}

\section{Discussion}

We provided theoretical and algorithmic results that can aid in the analysis and construction of well-performing networks, often by utilizing the connection between Network design and Experimental design. The directional derivatives and the node dissimilarities $v_{\Lb, p}$ played a crucial role in the presented results; we therefore highlight them as useful tools for the optimization of Laplacian spectra. 

Some of our results are new, while others extend existing findings. Even in the latter case, those findings are often known only within a single ND area, despite the fact that the Laplacian-optimization problems described in Section \ref{ssAreas} are mathematically equivalent across these areas. Accordingly, results proved in any one ND setting, including ours, prior work, and future advances, apply directly to all equivalent instances and need not be re-derived separately for each domain.

The presented results can be generalized straightforwardly in certain directions. In particular, for conciseness, we have assumed that any edge can be added to the network, but the results extend directly to problems where the graph of potential edges is incomplete. Similarly, the analysis can be easily extended to multigraphs. Non-trivial extensions and other interesting avenues for future research include:
\begin{enumerate}
	\item While we briefly summarized the diverse applications of Laplacian spectral optimization across different areas of ND, a more comprehensive survey would be both valuable and timely.
	\item Criteria other than the $\Phi_p$ measures deserve attention as well. In ED, popular measures include $MV$-, $G$-, and $c$-optimality (\cite{LopezFidalgo2023}, Section 2), which are not eigenvalue-based. Moreover, Kiefer's measures can be extended to $p \in [-1,0)$ (Chapter 6 in \cite{puk}), which, however, have significantly different properties; e.g., they are non-zero even for disconnected networks.
	\item A particularly interesting direction is the optimization of networks with respect to specific systems of comparisons; for example, minimization of the average resistance distance between a predefined subset of nodes $S$ and its complement $V \setminus S$. In ED, such objectives are known as ``partial'' optimality criteria or ``subsystems of interest'' \cite{puk}. 
	\item For $E$-optimality, we used the node dissimilarity $v_{\Lb, \infty}(e)$ for edge selection in the exchange method, which is a common approach in $E$-optimality algorithms. However, we noted in Section \ref{ssDirDer} that when $\lambda_{n-1}(\Lb)$ is not simple, the node dissimilarities and the rank-one directional derivatives $\partial\Phi_{\infty}(\Lb, \Lb + \x_e\x_e^T)$ may not be appropriate to use. Heuristics based on other directional derivatives, superdifferentials or other fundamentally different approaches are therefore worth examining. Alternatively, a thorough theoretical analysis of exchange or greedy heuristics that work with $v_{\Lb, \infty}(e)$ would be beneficial.
	\item We deliberately avoided continuous convex relaxations, which are popular in both ND and ED. However, we plan to investigate the continuous relaxation approach in ND in the future.
	\item Other optimization techniques developed for or used in ED may also be relevant, such as metaheuristics \cite{Haines87, DevonLin}, branch-and-bound methods \cite{Welch, DuarteEA, Ahipasaoglu21}, and the mixed-integer second-order cone programming approach \cite{SagnolHarman15}.
\end{enumerate}

\section*{Acknowledgments}
This work was supported by the Slovak Scientific Grant Agency
(VEGA), Grant 1/0480/26.

{\appendix
\section*{Appendix: Proofs}
\label{apProofs}

\subsection*{Proof of Theorem \ref{tReg}}

\begin{proof}
	Matrix $\K$ satisfies $\K^T\K = \I_{n-1}$ and $\K\K^T = \I_n - \J_n/n$, which is a projection matrix to $\1_n^\perp$. The latter claim follows from the fact that $[\K, \1_n/\sqrt{n}]$ is an orthogonal matrix. It follows that $\K\K^T \Lb = \Lb$, because $\C(\Lb) = \1_n^\perp$.
	
	Observe that $
	\K^T\Lb^+\K \widetilde{\Lb} = \K^T \Lb^+ \Lb \K$. Moreover, assumption $\rank(\Lb)=n-1$ guarantees $\C(\Lb) = \C(\K)$, which implies that there exists a matrix $\Z$, such that $\K = \Lb \Z$. Thus, $\K^T\Lb^+\K \widetilde{\Lb} = \Z^T\Lb \Lb^+ \Lb \K = \Z^T \Lb \K = \K^T\K = \I_{n-1}$, which proves (b) and (a). The positive eigenvalues of $\K^T \Lb \K$ are the same as those of $\Lb \K \K^T$ (see \cite{Seber}, Example 6.54(b)), which is exactly the matrix $\Lb$.
\end{proof}

\subsection*{Proof of Theorem \ref{tDirDer}}

To formulate results that are based on the regularized Laplacians in terms of the original Laplacians, we will employ the following technical lemma.

\begin{lemma}\label{lPowers}
	Let $p \in [0, \infty)$, and let $\Lb$ and $\K$ be as in Theorem \ref{tReg}. Then,
	\begin{equation}
		\label{eKLK}
		(\K^T \Lb \K)^{-p} = \K^T (\Lb^+)^{p} \K.
	\end{equation}
\end{lemma}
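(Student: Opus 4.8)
The plan is to diagonalize $\Lb$ on its range and push a single spectral decomposition through both sides of \eqref{eKLK} simultaneously; this treats every $p \in [0,\infty)$ at once, rather than arguing by induction on integer $p$ and extending by continuity. First I would record the structural identities already available from Theorem~\ref{tReg} and its proof: $\K^T\K = \I_{n-1}$, the matrix $\K\K^T = \I_n - \J_n/n$ is the orthogonal projector onto $\1_n^\perp$, and $\rank(\Lb) = n-1$ forces $\C(\Lb) = \C(\Lb^+) = \1_n^\perp = \C(\K)$.

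Next I would fix orthonormal eigenvectors $\vb_1,\dots,\vb_{n-1} \in \1_n^\perp$ of $\Lb$ with positive eigenvalues $\lambda_1,\dots,\lambda_{n-1}$, so that $\Lb = \sum_i \lambda_i \vb_i\vb_i^T$ and hence $(\Lb^+)^p = \sum_i \lambda_i^{-p}\vb_i\vb_i^T$ for every $p > 0$. Setting $\g_i := \K^T\vb_i \in \R^{n-1}$, the projector identity $\K\K^T\vb_i = \vb_i$ (valid because $\vb_i \perp \1_n$) gives $\vb_i = \K\g_i$, while $\g_i^T\g_j = \vb_i^T\K\K^T\vb_j = \vb_i^T\vb_j = \delta_{ij}$; since there are $n-1$ of them in $\R^{n-1}$, the set $\{\g_i\}$ is an orthonormal basis. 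Substituting into both sides of \eqref{eKLK} then yields
\begin{equation*}
\K^T(\Lb^+)^p\K = \sum_{i=1}^{n-1}\lambda_i^{-p}\g_i\g_i^T = \Big(\sum_{i=1}^{n-1}\lambda_i\g_i\g_i^T\Big)^{-p} = (\K^T\Lb\K)^{-p},
\end{equation*}
where the middle equality uses that $\sum_i \lambda_i\g_i\g_i^T$ is the spectral decomposition of the positive definite matrix $\K^T\Lb\K$ in the orthonormal basis $\{\g_i\}$, so that its $-p$ power is obtained by raising each eigenvalue to the power $-p$.

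The one genuinely delicate point is the boundary case $p = 0$, owing to the paper's convention that $\A^0$ denotes the projector onto $\C(\A)$ rather than the identity. Here I would check the two sides separately: $(\Lb^+)^0 = \K\K^T$ is the projector onto $\1_n^\perp$, so $\K^T(\Lb^+)^0\K = \K^T\K\K^T\K = \I_{n-1}$; and since $\K^T\Lb\K$ is positive definite by Theorem~\ref{tReg}, its column space is all of $\R^{n-1}$ and thus $(\K^T\Lb\K)^{0} = \I_{n-1}$ as well. I expect this bookkeeping around the $p=0$ convention, together with confirming that $\{\g_i\}$ is a \emph{full} orthonormal basis of $\R^{n-1}$ (precisely where $\rank(\Lb) = n-1$ is used), to be the only places requiring care; for $p > 0$ the identity is essentially immediate once the change of basis $\vb_i \mapsto \g_i$ is in place.
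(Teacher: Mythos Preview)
Your proof is correct and follows essentially the same route as the paper: both diagonalize $\Lb$ via orthonormal eigenvectors $\vb_i \in \1_n^\perp$, observe that the vectors $\K^T\vb_i$ form an orthonormal basis of $\R^{n-1}$ that diagonalizes $\K^T\Lb\K$, and then read off the $-p$ power from this shared spectral decomposition. Your version is slightly more explicit in verifying orthonormality of the $\g_i$ and in treating the $p=0$ projector convention separately, but the underlying argument is identical.
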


\begin{proof}
	Let $p \in [0, \infty)$ and
	let $\Lb = \V \Lambdab \V^T$ be the spectral decomposition of $\Lb$, where $\V=[\vb_1, \ldots, \vb_{n-1}, \1_n/\sqrt{n}]$. Then $\Lb^+ = \V \Lambdab^+ \V^T$, and $\K^T\vb_1, \ldots, \K^T \vb_{n-1}$ are the eigenvectors of $\K^T\Lb\K$ with eigenvalues $\lambda_1, \ldots, \lambda_{n-1}$. Let us denote $\bar{\V} = [\vb_1,\ldots, \vb_{n-1}]$ and $\bar{\Lambdab} = \diag(\lambda_1, \ldots, \lambda_{n-1})$; therefore $\K^T \bar{\V} \bar{\Lambdab} \bar{\V}^T \K$ is the spectral decomposition of $\K^T\Lb\K$. Then,
	\begin{align*}
		(\K^T\Lb\K)^{-p} &= \K^T \bar{\V} \bar{\Lambdab}^{-p} \bar{\V}^T \K 
		=\K^T \V \begin{bmatrix}
			\bar{\Lambdab}^{-p} & 0 \\ 0 & 0
		\end{bmatrix} \V^T \K \\
		&= \K^T \V (\Lambdab^+)^{p} \V^T \K 
		= \K^T (\Lb^+)^{p} \K.
	\end{align*}
\end{proof}

Now, we can prove Theorem \ref{tDirDer}.

\begin{proof}
	Recall that $\K\K^T = \I_{n} - \J_{n}/n$ is the projection matrix to $\1_n^\perp$, which yields 
	\begin{equation}\label{eLProj}
		\K\K^T\Lb_i = \Lb_i\K\K^T = \Lb_i, \quad i=1,2. 
	\end{equation}
	Moreover, $\Phi_p(\Lb_i) = \widetilde{\Phi}_p(\K^T\Lb_i\K)$ (Theorem \ref{tReg}).
	The formula for the directional derivative of $\widetilde{\Phi}_p$ is
	\begin{equation}\label{ePartialPhip}
		\partial\widetilde{\Phi}_p(\A, \B) = \widetilde{\Phi}_p(\A)\left(\frac{\tr(\A^{-p-1}\B)}{\tr(\A^{-p})} - 1 \right)
	\end{equation}
	\cite{Gaffke85}. It follows that
	\begin{align*}
		\partial{\Phi}_p(\Lb_1, \Lb_2) 
		&= \partial\widetilde{\Phi}_p(\K^T\Lb_1\K, \K^T\Lb_2\K) \\
		&= \widetilde{\Phi}_p(\K^T\Lb_1\K)\left(\frac{\tr((\K^T\Lb_1\K)^{-p-1}\K^T\Lb_2\K)}{\tr((\K^T\Lb_1\K)^{-p})} - 1 \right) \\
		&=\Phi_p(\Lb_1)\left(\frac{\tr(\K^T(\Lb_1^+)^{1+p}\K\K^T\Lb_2\K)}{\tr(\K^T(\Lb_1^+)^{p}\K)} - 1 \right) \quad [\eqref{eKLK}] \\
		&=\Phi_p(\Lb_1)\left(\frac{\tr((\Lb_1^+)^{1+p}\Lb_2)}{\tr((\Lb_1^+)^{p})} - 1 \right) \quad [\eqref{eLProj}].
	\end{align*} 
\end{proof}

\subsection*{Proof of Theorem \ref{tDirDerE}}

Before tackling the proof, we first extend the function $\Phi_\infty$ to the entire set $\Sc^{n}$: $\Psi(\A) = \lambda_{n-1}(\A) + \lambda_n(\A)$ for $\A \in \Sc^{n}$. Clearly, $\Phi_\infty(\Lb) = \Psi(\Lb)$ for any $\Lb \in \Lc^n$. It follows that
$\partial\Phi_\infty(\Lb_1, \Lb_2) = \partial\Psi(\Lb_1, \Lb_2)$
for any $\Lb_1, \Lb_2 \in \Lc^n$. Note that $\Psi$ is non-differentiable even at some matrices from $\Lc^n$ of rank $n-1$.

The superdifferential of a function $f: \Sc^n \to \R$ at $\A$ is the set
\begin{equation*}
S_{f}(\A) = \{ \G \in \Sc^n \, \vert \, f(\B) \leq f(\A) + \tr(\G(\B - \A)) \,\, \forall \, \B \in \Sc^n \}.
\end{equation*}
Any element of $S_{f}(\A)$ is called a supergradient of $f$ at $\A$. When $f$ is concave, but not necessarily differentiable at $\A$, then 
\begin{equation}\label{eSuperDirDer}
	\partial f(\A, \B) = \inf_{\G \in S_f(\A)} \tr( \G (\B - \A))
\end{equation}
(cf. Theorem 23.4 in \cite{Rockafellar}).

\begin{lemma}\label{lSuperE}
	Let $\Lb$ be the Laplacian of a connected graph and let $s$ be the multiplicity of $\lambda_{n-1}(\Lb)$. Then,
	\begin{equation*}
	S_{\Psi}(\Lb) = \left\{\left. \frac{1}{n}\J_n + \sum_{i=1}^s \gamma_i \ub_i \ub_i^T \, \right\vert \, [\ub_1, \ldots, \ub_s] \in \Uc, \gammab \in \Gamma \right\},
\end{equation*}
	where $\Uc$ is the set of $(n-1)\times s$ matrices, whose columns are orthonormal eigenvectors of $\Lb$ corresponding to $\lambda_{n-1}(\Lb)$, and $\Gamma = \{\gammab \in \R^s \, \vert \, \sum_{i=1}^s \gamma_i = 1, \gamma_i \in [0, 1] \, \forall i \}$.
\end{lemma}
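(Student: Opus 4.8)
The plan is to recognize $\Psi$ as the sum of the two smallest eigenvalues of a symmetric matrix and to exploit the classical variational (Ky Fan) representation of this concave function. In the paper's decreasing convention $\Psi(\A)=\lambda_{n-1}(\A)+\lambda_n(\A)$ is the sum of the two smallest eigenvalues, and Ky Fan's minimum principle gives
\begin{equation*}
\Psi(\A)=\min_{\X\in\mathcal F}\tr(\A\X),\qquad \mathcal F:=\{\X\in\Sc^n:\ \0\preceq\X\preceq\I_n,\ \tr(\X)=2\},
\end{equation*}
a minimum of linear functionals of $\A$ over the compact convex ``fantope'' $\mathcal F$. In particular $\Psi$ is concave and positively homogeneous, which is exactly the structure needed to read off its superdifferential.

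From this representation I would show that $S_{\Psi}(\Lb)$ is the face of $\mathcal F$ exposed by $\Lb$, namely
\begin{equation*}
S_{\Psi}(\Lb)=\{\X\in\mathcal F:\ \tr(\Lb\X)=\Psi(\Lb)\}.
\end{equation*}
The inclusion $\supseteq$ is immediate: if $\X^{\ast}\in\mathcal F$ attains the minimum, then for every $\B\in\Sc^n$ we have $\Psi(\B)\le\tr(\B\X^{\ast})=\Psi(\Lb)+\tr(\X^{\ast}(\B-\Lb))$, so $\X^{\ast}$ is a supergradient. For $\subseteq$ I would use positive homogeneity: taking $\B=2\Lb$ and $\B=\0$ in the supergradient inequality forces any $\G\in S_{\Psi}(\Lb)$ to satisfy $\tr(\G\Lb)=\Psi(\Lb)$, whence $\tr(\G\B)\ge\Psi(\B)$ for all $\B$. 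Testing this against $\B=\pm\I_n$ gives $\tr(\G)=2$, and testing against $\B=\w\w^{T}$ and $\B=-\w\w^{T}$ for unit vectors $\w$ (with $n\ge 3$) gives $\0\preceq\G\preceq\I_n$; hence $\G\in\mathcal F$ lies in the exposed face. This is the elementary version of the standard identification of the subdifferential of a support function with its exposed face, the same convex-analysis fact underlying \eqref{eSuperDirDer}.

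It then remains to identify the minimizers explicitly. In an eigenbasis $\vb_1,\dots,\vb_n$ of $\Lb$ with $\vb_n=\1_n/\sqrt n$, writing $x_{ii}=\vb_i^{T}\X\vb_i$ gives $\tr(\Lb\X)=\sum_i\lambda_i x_{ii}$ with $0\le x_{ii}\le 1$ and $\sum_i x_{ii}=2$. Connectedness makes $\lambda_n=0$ simple, so $0=\lambda_n<\lambda_{n-1}=\cdots=\lambda_{n-s}$, and minimizing the weighted sum (to the value $\lambda_{n-1}=\Psi(\Lb)$) forces full weight $1$ onto $\vb_n$, total weight $1$ on the $\lambda_{n-1}$-eigenspace, and zero weight on eigenvalues strictly larger than $\lambda_{n-1}$. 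I would convert these weight conditions into matrix statements by complementary slackness: $x_{nn}=1$ with $\I_n-\X\succeq\0$ gives $\X\1_n=\1_n$ (so $\J_n/n$ is the $\vb_n$-block of $\X$), while $x_{ii}=0$ with $\X\succeq\0$ gives $\X\vb_i=\0$ whenever $\lambda_i>\lambda_{n-1}$. Hence every minimizer is $\X=\J_n/n+\Y$ with $\Y$ supported on the $\lambda_{n-1}$-eigenspace, $\0\preceq\Y\preceq\I$ and $\tr(\Y)=1$; diagonalizing $\Y$ in its own orthonormal eigenbasis $\ub_1,\dots,\ub_s$ (which then lie in that eigenspace, i.e.\ $[\ub_1,\dots,\ub_s]\in\Uc$) yields $\Y=\sum_{i=1}^{s}\gamma_i\ub_i\ub_i^{T}$ with $\gamma_i\in[0,1]$ and $\sum_i\gamma_i=1$, i.e.\ $\gammab\in\Gamma$, which is precisely the claimed description.

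The main obstacle is this last identification: passing from the scalar optimality conditions on the weights $x_{ii}$ to the rigid block structure of $\X$. The decisive leverage is the strict gap $\lambda_n=0<\lambda_{n-1}$ guaranteed by connectedness (rank $n-1$): it is what pins weight $1$ onto $\1_n/\sqrt n$ and produces the fixed summand $\J_n/n$, and without it the exposed face would be strictly larger and the formula would fail. The complementary-slackness steps themselves ($\vb^{T}(\I_n-\X)\vb=0$ with $\I_n-\X\succeq\0$ implies $\X\vb=\vb$, and $\vb^{T}\X\vb=0$ with $\X\succeq\0$ implies $\X\vb=\0$) are elementary but must be applied to every eigenvector of $\Lb$ outside the Fiedler space; once the block structure is secured, the spectral decomposition of the free block $\Y$ recovers the $(\ub_i,\gamma_i)$ parametrization with no further computation.
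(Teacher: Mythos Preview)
Your argument is correct. The Ky Fan representation $\Psi(\A)=\min_{\X\in\mathcal F}\tr(\A\X)$ over the fantope, the identification of the superdifferential with the exposed face via positive homogeneity, and the complementary-slackness passage from the diagonal constraints to the block structure of the minimizers all go through as written; the simplicity of $\lambda_n=0$ afforded by connectedness is indeed exactly what forces the rigid $\J_n/n$ summand.

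The paper's own proof is entirely different in character: it is a one-line citation, invoking Proposition~3 of \cite{Harman04} (a general result on superdifferentials of sums of extreme eigenvalues) with $k=2$. Your route is thus genuinely more elementary and self-contained: you reconstruct from scratch, for this specific $k=2$ case, what that cited proposition asserts, using only Ky Fan and basic convex analysis. The trade-off is length versus transparency --- the paper's version keeps the narrative uncluttered by outsourcing the work, while yours would make the paper independent of the external reference and exposes precisely where the rank-$(n-1)$ hypothesis enters.
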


\begin{proof}
	The result follows directly from Proposition 3 in \cite{Harman04}, by setting $k=2$ in their notation.
\end{proof}

Proof of Theorem \ref{tDirDerE} follows.
\begin{proof}
	\begin{align*}
		\partial\Phi_{\infty}(\Lb_1, \Lb_2) 
		&= \partial\Psi(\Lb_1, \Lb_2)
		= \inf_{\G \in S_\Psi(\Lb_1)} \tr( \G (\Lb_2 - \Lb_1)) \\
		&= \inf_{\gammab \in \Gamma, (\ub_1, \ldots, \ub_s) \in \Uc} \sum_{i=1}^s \gamma_i \ub_i^T\Lb_2\ub_i - \lambda_{n-1}(\Lb_1)
	\end{align*}
	because of Lemma \ref{lSuperE}, $\J_n\Lb_i = \0$, and $\tr(\ub_i\ub_i^T\Lb_1) = \lambda_{n-1}(\Lb_1)$.
	For a fixed $[\ub_1, \ldots, \ub_s] \in \Uc$, we have \begin{equation*}\inf_{\gammab \in \Gamma} \sum_{i=1}^s \gamma_i \ub_i^T\Lb_2\ub_i = \min_{i=1, \ldots, s} \ub_i^T\Lb_2\ub_i.\end{equation*} Moreover, 
	\begin{equation}\label{ePartialEmin}
		\inf_{(\ub_1, \ldots, \ub_s) \in \Uc} \min_{i=1, \ldots, s} \ub_i^T\Lb_2\ub_i = \inf_{\ub} \ub^T\Lb_2\ub,
	\end{equation} 
	where the infimum on the right-hand side is taken over all unit-length eigenvectors $\ub$ of $\Lb_1$ corresponding to $\lambda_{n-1}(\Lb_1)$. 
	We obtained the infimum of a continuous function over a compact set; therefore, it is actually the (finite) minimum. 
	To solve the optimization problem in \eqref{ePartialEmin}, note that $\Lb_1 \ub = \lambda_{n-1}(\Lb_1)\ub$ is equivalent to $\ub = \U\z$ for some $\z \in \R^s$. Then, $1 = \lVert \ub \rVert^2 = \lVert \z \rVert^2$, because the columns of $\U$ are orthonormal, and $\ub^T\Lb_2\ub = \z^T\U^T\Lb_2\U\z$. Moreover,
	\begin{equation*}
	\min_{\lVert \z \rVert^2 = 1} \z^T\U^T\Lb_2\U\z = \lambda_{s}(\U^T\Lb_2\U).
\end{equation*}
	If $\lambda_{n-1}(\Lb_1)$ is simple, then $\U = \ub$ and $\lambda_{s}(\U^T\Lb_2\U) = \ub^T\Lb_2\ub$.
	
	Equality \eqref{ePartialDerqE} is obtained by setting $\Lb_2=\Lb_1 + \x_e\x_e^T$. If $\lambda_{n-1}(\Lb_1)$ is simple, then $\ub^T\Lb_2\ub = (\x_e^T\ub)^2 + \lambda_{n-1}(\Lb_1)$. If it is not simple, then 
	$
	\lambda_{s}(\U^T\Lb_2\U) = \lambda_{s}(\U^T\x_e\x_e^T\U) + \lambda_{n-1}(\Lb_1).
	$
	Then, $\lambda_s(\U^T\x_e\x_e^T\U) = 0$, because the rank of $\U^T\x_e\x_e^T\U \in \Sc^s_+$ is at most 1.
\end{proof}

\subsection*{Proof of Theorem \ref{tPowerTrace}}

We first formulate a technical lemma.

\begin{lemma}\label{lPowerMat}
	Let $\A \in \R^{m \times m}$ be non-singular, and let $\ub, \vb \in \R^n$. Then,
	\begin{equation}\label{ePowerGenralWOtrace}
		(\A + \ub\vb^T)^p = \A^p + \sum_{k=1}^p \sum_{\substack{i_0, \ldots, i_k \geq 0 \\ i_0 + \ldots + i_k = p-k}} \A^{i_0} \ub \left( \prod_{j=1}^{k-1} \vb^T \A^{i_j} \ub \right) \vb^T \A^{i_k}
	\end{equation}
	and
	\begin{equation}\label{ePowerGeneral}
		\tr((\A + \ub\vb^T)^p) = \tr(\A^p) + \sum_{k=1}^p \sum_{\substack{i_0, \ldots, i_k \geq 0 \\ i_0 + \ldots + i_k = p-k}} \vb^T \A^{i_0 + i_k} \ub \left( \prod_{j=1}^{k-1} \vb^T \A^{i_j} \ub \right).
	\end{equation}
	for any positive integer $p$. Here, $i_0, \ldots, i_k$ are integers, and the product over $j \in \{1, \ldots, k-1\}$ is understood to be equal to one if $k=1$.
\end{lemma}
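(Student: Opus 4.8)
The plan is to prove the matrix identity \eqref{ePowerGenralWOtrace} by directly expanding the power $(\A + \ub\vb^T)^p$ as a product of $p$ identical factors, and then to deduce the trace identity \eqref{ePowerGeneral} from cyclic invariance of the trace. No use of non-singularity is actually needed here; both are purely algebraic identities valid for any square matrix $\A$ (non-singularity only matters later, when negative powers appear in Theorem~\ref{tPowerTrace}).

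First I would expand $(\A+\ub\vb^T)^p$ into $2^p$ \emph{words}, each a product of $p$ factors in which every factor is either $\A$ or $\ub\vb^T$. I would then group these words by the number $k \in \{0,\ldots,p\}$ of factors equal to $\ub\vb^T$. The unique $k=0$ word is exactly $\A^p$. For $k\geq 1$, each word is determined by the lengths $i_0,\ldots,i_k\geq 0$ of the maximal runs of consecutive $\A$'s before, between, and after the $k$ occurrences of $\ub\vb^T$; these satisfy $i_0+\cdots+i_k = p-k$, and conversely every such tuple corresponds to exactly one word. This sets up a bijection between words containing $k$ copies of $\ub\vb^T$ and the index set appearing in the inner sum of \eqref{ePowerGenralWOtrace}.

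The next step is to simplify a generic word $\A^{i_0}\ub\vb^T\A^{i_1}\ub\vb^T\cdots\ub\vb^T\A^{i_k}$. Each interior factor $\vb^T\A^{i_j}\ub$ for $j=1,\ldots,k-1$ is a scalar, so it can be pulled out of the product, leaving $\A^{i_0}\ub\left(\prod_{j=1}^{k-1}\vb^T\A^{i_j}\ub\right)\vb^T\A^{i_k}$, which is precisely the summand in \eqref{ePowerGenralWOtrace}. The case $k=1$, where the interior product is empty and equal to one, reproduces the term $\A^{i_0}\ub\vb^T\A^{i_1}$, matching the convention stated in the lemma. Summing over $k$ and over the run-length tuples then yields \eqref{ePowerGenralWOtrace}.

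Finally, I would take the trace of both sides. The scalar factors $\prod_{j=1}^{k-1}\vb^T\A^{i_j}\ub$ pass outside the trace, and cyclic invariance gives $\tr(\A^{i_0}\ub\vb^T\A^{i_k}) = \vb^T\A^{i_k}\A^{i_0}\ub = \vb^T\A^{i_0+i_k}\ub$, which produces exactly the summand in \eqref{ePowerGeneral}. The main obstacle is purely bookkeeping: making the bijection between binary words and run-length tuples $(i_0,\ldots,i_k)$ precise, and checking the boundary conventions (the $k=0$ term equal to $\A^p$ and the empty interior product at $k=1$). An alternative route is induction on $p$, multiplying the formula for exponent $p$ by $(\A+\ub\vb^T)$; this avoids the explicit word count but requires a more delicate re-indexing of the gap sums when the new factor is absorbed, so I would favour the direct expansion.
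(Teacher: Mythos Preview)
Your argument is correct. The paper, however, takes the inductive route you mention only to set aside: it assumes \eqref{ePowerGenralWOtrace} for exponent $p$, multiplies on the left by $(\A+\ub\vb^T)$, and then carefully re-indexes the resulting sums by splitting the target formula at exponent $p+1$ according to whether $i_0=0$ or $i_0\geq 1$. The three pieces coming from $\A\cdot(\text{sum})$, the lone $\ub\vb^T\A^p$, and $\ub\vb^T\cdot(\text{sum})$ are matched to the cases $i_0\geq 1$, $(i_0,k)=(0,1)$, and $i_0=0,\,k\geq 2$ respectively.

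Your direct word expansion is more transparent: the bijection between binary words with $k$ occurrences of $\ub\vb^T$ and run-length tuples $(i_0,\ldots,i_k)$ with $i_0+\cdots+i_k=p-k$ explains at a glance why the formula has exactly this shape, and the scalar collapse of the interior factors $\vb^T\A^{i_j}\ub$ is immediate. You also correctly observe that non-singularity of $\A$ is irrelevant here. The paper's induction, by contrast, avoids the need to articulate the word-to-tuple bijection but pays for it with the somewhat fiddly case split in the re-indexing step. Either approach is perfectly adequate for this lemma; yours is the cleaner of the two.
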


\begin{proof}
	The claim can be proved by mathematical induction. The formula clearly holds for $p=1$. Now suppose that it holds for $p$; we wish to prove that
	\begin{equation}
		(\A + \ub\vb^T)^{p+1} 
		= \A^{p+1} + 
		\sum_{k=1}^{p+1} \sum_{\substack{i_0, \ldots, i_k \geq 0 \\ i_0 + \ldots + i_k = p-k+1}} \A^{i_0} \ub \left( \prod_{j=1}^{k-1} \vb^T \A^{i_j} \ub \right) \vb^T \A^{i_k}. \label{ePowerFormula}
	\end{equation}
	We have
	\begin{align}
		(\A + \ub\vb^T)^{p+1} 
		&= (\A + \ub\vb^T) (\A + \ub\vb^T)^p \notag \\
		&= \A^{p+1} + 
		\sum_{k=1}^p \sum_{\substack{i_0, \ldots, i_k \geq 0 \\ i_0 + \ldots + i_k = p-k}} \A^{i_0 + 1} \ub \left( \prod_{j=1}^{k-1} \vb^T \A^{i_j} \ub \right) \vb^T \A^{i_k} + 
		\ub\vb^T\A^p +\notag  \\
		&+\sum_{k=1}^p \sum_{\substack{i_0, \ldots, i_k \geq 0 \\ i_0 + \ldots + i_k = p-k}} \ub\vb^T\A^{i_0} \ub \left( \prod_{j=1}^{k-1} \vb^T \A^{i_j} \ub \right) \vb^T \A^{i_k}  \notag \\
		&= \A^{p+1} + 
		\sum_{z=1}^p \sum_{\substack{\ell_0 \geq 1, \ell_1 \ldots, \ell_z \geq 0 \\ \ell_0 + \ldots + \ell_z = p-z+1}} \A^{\ell_0} \ub \left( \prod_{j=1}^{z-1} \vb^T \A^{\ell_j} \ub \right) \vb^T \A^{\ell_z} +
		\ub\vb^T\A^p + \notag \\
		&+\sum_{z=2}^{p+1} \sum_{\substack{\ell_1, \ldots, \ell_z \geq 0 \\ \ell_1 + \ldots + \ell_z = p-z+1}} \ub \left( \prod_{j=1}^{z-1} \vb^T \A^{\ell_j} \ub \right) \vb^T \A^{\ell_z}, \label{ePowerFormula2}
	\end{align}
	where the last equality is obtained by appropriate substitutions. Each element in the main sum in \eqref{ePowerFormula} (where $k$ runs from $1$ to $p+1$) can be split into the part with $i_0 = 0$ and into that with $i_0 \geq 1$. The second term in \eqref{ePowerFormula2} corresponds to the main sum in \eqref{ePowerFormula} for $i_0 \geq 1$; the third term corresponds to the main sum in \eqref{ePowerFormula} for $i_0 = 0$ and $k=1$; and the last term in \eqref{ePowerFormula2} corresponds to the main sum in \eqref{ePowerFormula} for $i_0 = 0$ and $k \geq 2$; thus \eqref{ePowerFormula2} is equal to \eqref{ePowerFormula}. \eqref{ePowerGeneral} is a direct corollary of \eqref{ePowerGenralWOtrace}
\end{proof}

Theorem \ref{tPowerTrace} is obtained by combining \eqref{ePowerGeneral} and \eqref{eRankOneUpdate}.

\subsection*{Proof of Theorem \ref{tPowerInv}}

We again begin with a technical lemma.

\begin{lemma}\label{lPowerInv}
	Let $\A \in \R^{n \times n}$ be non-singular, and let $\ub, \vb \in \R^n$. Then,
	\begin{equation*}
	(\A+\ub\vb^T)^p = \A^p + \sum_{j=0}^{p-1} (\A+\ub\vb^T)^j\ub\vb^T\A^{p-j-1}
\end{equation*}
	for any positive integer $p$.
\end{lemma}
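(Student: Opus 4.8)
The plan is to prove the identity by exhibiting a telescoping sum, which bypasses induction entirely. Writing $\B := \A + \ub\vb^T$ for brevity, the claim becomes
\begin{equation*}
\B^p = \A^p + \sum_{j=0}^{p-1} \B^j \ub\vb^T \A^{p-j-1}.
\end{equation*}
The key observation I would exploit is that $\ub\vb^T = \B - \A$, which lets each summand collapse into a difference of consecutive terms.

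First I would substitute $\ub\vb^T = \B - \A$ into a generic summand and expand:
\begin{equation*}
\B^j \ub\vb^T \A^{p-j-1} = \B^j(\B - \A)\A^{p-j-1} = \B^{j+1}\A^{p-j-1} - \B^j\A^{p-j}.
\end{equation*}
Defining the matrix sequence $\M_j := \B^j \A^{p-j}$ for $j = 0, \ldots, p$, this identifies the $j$th summand as $\M_{j+1} - \M_j$, so the whole sum telescopes:
\begin{equation*}
\sum_{j=0}^{p-1} \B^j \ub\vb^T \A^{p-j-1} = \sum_{j=0}^{p-1}(\M_{j+1} - \M_j) = \M_p - \M_0 = \B^p - \A^p,
\end{equation*}
where I use $\M_p = \B^p \A^0 = \B^p$ and $\M_0 = \B^0 \A^p = \A^p$. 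Rearranging gives the claimed formula immediately, for every positive integer $p$.

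As an alternative route, induction on $p$ also works cleanly: the base case $p=1$ is just $\B = \A + \ub\vb^T$, and for the inductive step I would left-multiply the formula for $p$ by $\B = \A + \ub\vb^T$, distribute the two terms, use $\B\cdot\B^j = \B^{j+1}$ to re-index the resulting sum, and then absorb the freshly produced term $\ub\vb^T\A^p$ as the $j=0$ entry of the shifted sum.

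I do not anticipate any genuine obstacle, since this is a routine algebraic identity valid over any (noncommutative) ring with the given entries. The only point requiring attention is \emph{spotting} the telescoping, i.e.\ the substitution $\ub\vb^T = \B - \A$ that rewrites each product as $\M_{j+1} - \M_j$; in the inductive version, the sole care needed is the bookkeeping of the index shift and the correct matching of the boundary terms $\M_0$ and $\M_p$.
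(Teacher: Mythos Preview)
Your proof is correct. The telescoping argument via $\ub\vb^T = \B - \A$ and $\M_j := \B^j\A^{p-j}$ is airtight: each summand is exactly $\M_{j+1}-\M_j$, and the boundary terms $\M_0=\A^p$, $\M_p=\B^p$ give the result immediately. The induction sketch you add is also correct; your left-multiplication variant works just as well as right-multiplication once the index shift and the new $j=0$ term are accounted for.

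The paper proves this lemma by induction, right-multiplying the formula for $p$ by $\A+\ub\vb^T$ and regrouping. Your telescoping route is a genuinely different and slightly slicker argument: it avoids induction entirely, makes transparent \emph{why} the identity holds (each summand is a first-order ``difference'' of the sequence $\M_j$), and shows at a glance that non-singularity of $\A$ is irrelevant to the lemma itself. The paper's induction is perfectly fine but hides this structure behind the bookkeeping of the inductive step. Both arguments are short; yours is the more conceptual one.
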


\begin{proof}
	The proof can be approached using mathematical induction. For $p=1$, the claim is clearly valid. Suppose that the formula holds for $p$; then
	\begin{align*}
		(\A + \ub\vb^T)^{p+1} 
		&= (\A + \ub\vb^T)^{p}(\A + \ub\vb^T)
		= \A^{p+1} + \sum_{j=0}^{p-1} (\A+\ub\vb^T)^j\ub\vb^T\A^{p-j} + (\A+\ub\vb^T)^p\ub\vb^T \\
		&= \A^{p+1} + \sum_{j=0}^{p} (\A+\ub\vb^T)^j\ub\vb^T\A^{p-j},
	\end{align*}
	which completes the proof.
\end{proof}

Theorem \ref{tPowerInv} follows by combining Lemma \ref{lPowerInv} with \eqref{eRankOneUpdate}.
}

\bibliographystyle{IEEEtran}
\bibliography{rosa.bib}

\end{document}